\numberwithin{equation}{section}
\newcommand{\bigpare}[1]{\bigl(#1\bigr)}
\newcommand{\biggpare}[1]{\biggl(#1\biggr)}
\newcommand{\Bigpare}[1]{\Bigl(#1\Bigr)}
\newcommand{\bigbrac}[1]{\bigl[#1\bigr]}
\newcommand{\bigset}[2]{\bigl\{#1\bigm|#2\bigr\}}
\newcommand{\norm}[1]{\| #1 \|}
\newcommand{\bignorm}[1]{\bigl\| #1 \bigr\|}
\newcommand{\biggnorm}[1]{\biggl\| #1 \biggr\|}
\newcommand{\bigabs}[1]{\bigl| #1 \bigr|}
\newcommand{\jap}[1]{\langle #1 \rangle}
\def\a{\alpha}
\def\b{\beta}
\def\c{\gamma}
\def\d{\delta}
\def\e{\varepsilon}
\def\f{\varphi}
\def\g{\psi}
\def\l{\lambda}
\def\m{\mu}
\def\n{\nu}
\def\s{\sigma}
\def\x{\xi}
\def\z{\zeta}
\def\th{\theta}
\newcommand{\F}{\Phi}
\newcommand{\G}{\Psi}
\def\re{\mathbb{R}}
\def\co{\mathbb{C}}
\def\ze{\mathbb{Z}}
\def\pa{\partial}
\renewcommand{\Re}{\text{{\rm Re}\;}}
\renewcommand{\Im}{\text{{\rm Im}\;}}
\newcommand{\supp}{\text{{\rm supp}\;}}
\DeclareMathOperator*{\slim}{s-lim}
\newcommand{\Ran}{\text{\rm Ran\;}}
\newtheorem{thm}{Theorem}
\newtheorem{lem}[thm]{Lemma}
\newtheorem{prop}[thm]{Proposition}
\newtheorem{cor}[thm]{Corollary}
\theoremstyle{definition}
\newtheorem{ass}{Assumption}
\theoremstyle{remark}
\title{Time-dependent scattering theory for Schr\"odinger operators %
on scattering manifolds\footnote{2000 {\it Mathematical Subject Classification.} 58J50, 34P25,
81U05. \newline
{\it Keywords and Phrases.} scattering theory on manifold, spectral properties, asymptotic completeness.}}
\author{
Kenichi I{\sc to}%
\footnote{Graduate School of Pure and Applied Sciences, University of Tsukuba,
1-1-1 Tennodai, Tsukuba Ibaraki, 305-8571 Japan. 
E-mail: \texttt{ito\_ken@math.tsukuba.ac.jp} }
\ and Shu N{\sc akamura}%
\footnote{Graduate School of Mathematical Sciences, 
University of Tokyo, 3-8-1 Komaba, Meguro Tokyo, 
153-8914 Japan. 
E-mail: {\tt shu@ms.u-tokyo.ac.jp}.  \newline
Partially supported by JSPS Grant Kiban (B) 17340033.} }
\begin{document}
\maketitle

\begin{abstract}
We construct a time-dependent scattering theory for Schr\"odinger operators 
on a manifold $M$  with asymptotically conic structure. We use
the  two-space scattering theory formalism, and a reference operator  
on a space of the form $\re\times \pa M$, where $\pa M$ is the boundary of $M$ 
at infinity. We prove the existence and 
the completeness of the wave operators, and show that our scattering 
matrix is equivalent to the absolute scattering matrix, which is defined in terms of the 
asymptotic expansion of generalized eigenfunctions. Our method is functional 
analytic, and we use no microlocal analysis in this paper. 
\end{abstract}


\section{Introduction}

In this paper, we consider the scattering theory for Schr\"odinger operators on an 
aymptotically conic manifold, which we call a {\it scattering manifold} following 
Melrose~\cite{Mel1, Mel2}. We propose a construction of a time-dependent scattering theory 
for this model using a simple reference system and the two-space scattering framework of 
Kato~\cite{Ka}. Here we present the formulation and results in a relatively simple 
situation to exhibit the main idea of the method. Namely, we are not trying to 
formulate and prove the most general theorems. There are many directions 
to generalize and/or refine our results, and we hope to study several 
possibilities in forthcoming research projects. 

Let $M$ be an $n$-dimensional smooth non-compact manifold such that
$M$ is decomposed to $M_c\cup M_\infty$, where $M_c$ is relatively compact, 
and $M_\infty$ is diffeomorphic to $\re_+\times \pa M$ with a compact manifold 
$\pa M$. We fix an identification map:
\[
\iota \ :\ M_\infty \ \longrightarrow \ \re_+\times \pa M \in (r,\th). 
\]
We suppose $M_c\cap M_\infty\subset (0,1)\times \pa M$ under this identification. 
We also suppose $\pa M$ is equipped with a measure $H(\th)d\th$ where $H(\th)$ is a 
smooth positive density, and a positive (2,0)-tensor $h=(h^{jk}(\th))$. We set
\[
\mathcal{H}_b = L^2(\pa M, H d\th), \quad 
Q=-\frac12 \sum_{j,k} H(\th)^{-1}\frac{\pa}{\pa\th_j} H(\th) h^{jk}(\th) \frac{\pa}{\pa \th_k}, 
\]
where $(\th_j)_{j=1}^{n-1}$ denotes a local coordinate system in $\pa M$. 
We note $Q$ is an essentially self-adjoint elliptic operator on $\mathcal{H}_b$. As usual, 
we denote the unique self-adjoint extension by the same symbol $Q$. 

Let $\{\f_\a: U_\a \to \re^{n-1}\}$, $U_\a\subset \pa M$, be a local coordinate system of $\pa M$. 
We set $\{\tilde\f_\a: \re_+\times U_\a \to \re\times\re^{n-1}\}$ to be a local coordinate
system of $M_\infty \cong \re_+\times \pa M$, and we denote $(r,\th)\in \re\times \re^{n-1}$ to 
represent a point in $M_\infty$. We suppose $G(x)$ is a smooth positive density on $M$ such that 
\[
G(x)dx= r^{n-1} H(\th) dr\,d\th \quad \text{on } \tilde M_\infty =(1,\infty)\times \pa M
\subset M_\infty, 
\]
and we set
\[
\mathcal{H}= L^2(M, G(x) dx).
\]
Let $P$ be a formally self-adjoint second order elliptic operator on $\mathcal{H}$ of the form: 
\[
P= -\frac12 G^{-1} (\pa_r, \pa_\th/r) G \begin{pmatrix} a_1 & a_2 \\ {}^t a_2 & a_3 \end{pmatrix}
\begin{pmatrix} \pa_r \\ \pa_\th/r \end{pmatrix} +V
\quad \text{on } \tilde M_\infty, 
\]
where $a_1$, $a_2$, $a_3$ are real-valued smooth tensors, and $V$ is a real-valued smooth function. 

\begin{ass}
For any $\ell\in \ze_+$, $\a\in\ze_+^{n-1}$, there is $C_{\ell \a}>0$ such that 
\begin{align*}
&\bigabs{\pa_r^\ell\pa_\th^\a (a_1(r,\th)-1)} \leq C_{\ell\a} r^{-\m_1-\ell}, \\
&\bigabs{\pa_r^\ell\pa_\th^\a a_2(r,\th)} \leq C_{\ell\a} r^{-\m_2-\ell}, \\
&\bigabs{\pa_r^\ell\pa_\th^\a (a_3(r,\th)-h(\th))} \leq C_{\ell\a} r^{-\m_3-\ell},\\
&\bigabs{\pa_r^\ell\pa_\th^\a V(r,\th)} \leq C_{\ell\a} r^{-\m_4-\ell}
\end{align*}
on $\tilde M_\infty$, where $\m_1,\m_4>1$ and $\m_2,\m_3>0$. 
Note that we use the coordinate system in $M_\infty$ described above. 
\end{ass}

Roughly speaking, Assumption~A implies the principal term of $P$ is asymptotically close to 
$-\frac12 (\pa_r^2+\frac1{r^2} Q)$ as $r\to\infty$. 
The condition on the scalar potential $V$ can be considerably relaxed, and the precise 
condition is discussed in the following sections. 
We suppose Assumption~A throughout this section. 

We will construct a time-dependent scattering theory for $P$ on $\mathcal{H}$. 
This model is a natural generalization of the Laplacian on a manifold with scattering metric
(cf. \cite{Mel1}), and we discuss the geometric construction and the relationship  
with the scattering metric in Appendix~A. 

We first observe several basic spectral properties of $P$, which are quite analogous to 
those for Schr\"odinger operators on $\re^n$. We denote the essential spectrum, 
the discrete spectrum, the pure point spectrum, the absolutely continuous spectrum, 
and the singular continuous spectrum by $\s_{ess}(\cdot)$, $\s_{d}(\cdot)$, 
$\s_{pp}(\cdot)$, $\s_{ac}(\cdot)$ and $\s_{sc}(\cdot)$, respectively. 

\begin{thm}
(1) $P$ is essentially self-adjoint on $C_0^\infty(M)$. We denote the unique self-adjoint 
extension by the same symbol $P$. \newline
(2) $\s_{ess}(P)=[0,\infty)$, and $\s_{disc}(P) \subset (-\infty,0]$ is bounded and 
accumulate at most at $\{0\}$. \newline
(3) $\s_{pp}(P)$ is discrete except for $\{0\}$; $\s_{ac}(P)=[0,\infty)$; $\s_{sc}(P)=\emptyset$.
\end{thm}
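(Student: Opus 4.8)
The plan is to treat the three assertions in turn, in each case localizing the analysis near infinity and comparing $P$ there with the homogeneous radial model $P_0=-\tfrac12(\pa_r^2+r^{-2}Q)$ on $M_\infty$; all three then reduce to explicit properties of $P_0$. \emph{Part (1).} First I would record the lower bound $P\geq-C$ on $C_0^\infty(M)$: on the relatively compact part $M_c$ this is ellipticity and smoothness, while on $\tilde M_\infty$ the form $\jap{Pu,u}$ equals $\tfrac12$ times the quadratic form of the positive-definite coefficient matrix $\bigl(\begin{smallmatrix}a_1&a_2\\{}^ta_2&a_3\end{smallmatrix}\bigr)$ evaluated at $(\pa_r u,\,r^{-1}\pa_\theta u)$, plus $\jap{Vu,u}$, and this is $\geq-C\norm{u}^2$ since $a_1\to1$, $a_3\to h>0$, $a_2\to0$ (Cauchy--Schwarz absorbing the cross term) and $V=O(r^{-\m_4})$. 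By Assumption~A the principal symbol of $P$ is, for large $r$, uniformly comparable to $\tfrac12(\xi_r^2+r^{-2}\abs{\xi_\theta}_h^2)$, so the metric determined by the principal part is complete ($\int^\infty dr=\infty$); a standard argument based on completeness --- finite propagation speed for $\pa_t^2u+2Pu=0$, in the spirit of Gaffney and Chernoff --- then shows $C_0^\infty(M)$ is a core for the closure of $P$, which is~(1).

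\emph{Part (2).} For $u\in C_0^\infty$ supported in $\{r>R\}$ the form estimate above improves to $\jap{Pu,u}\geq-\e(R)\norm{u}^2$ with $\e(R)\to0$ as $R\to\infty$, so Persson's theorem gives $\inf\sigma_{ess}(P)\geq0$. For the reverse inclusion I would, for each $\lambda\geq0$, produce the singular sequence $u_m(r,\theta)=r^{-(n-1)/2}e^{i\sqrt{2\lambda}\,r}\psi_0(\theta)\chi\bigl((r-R_m)/L_m\bigr)$ --- with $\psi_0$ the constant mode ($Q\psi_0=0$), $R_m\to\infty$ and $1\ll L_m\ll R_m$ --- which satisfies $\norm{(P-\lambda)u_m}/\norm{u_m}\to0$ and has pairwise disjoint supports escaping to infinity; hence $[0,\infty)\subseteq\sigma_{ess}(P)$. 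Thus $\sigma_{ess}(P)=[0,\infty)$, and combined with $P\geq-C$ this forces $\sigma_{disc}(P)\subseteq[-C,0)$: bounded, and accumulating at most at $\inf\sigma_{ess}(P)=0$. This is~(2).

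\emph{Part (3).} I would apply Mourre's commutator method with conjugate operator $A$ equal to the generator of radial dilations on $\tilde M_\infty$ (i.e., the momentum $D_x$ in the coordinate $x=\log r$), cut off near $r\sim1$ and on $M_c$ so as to be self-adjoint on $\mathcal H$, and check $P\in C^2(A)$ by bounding the iterated commutators with the help of Assumption~A. The structural heart of the argument is the exact scaling identity $i[P_0,A]=2P_0$ (valid because $P_0$ is homogeneous of degree $-2$ under radial dilation), so that, writing $W=P-P_0$ (coefficients $O(r^{-\m_i})$ near infinity, plus a term supported near $M_c$ coming from the cut-off of $A$), one has $i[P,A]=2P-2W+i[W,A]$. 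Splitting each coefficient of the remainder $-2W+i[W,A]$ into a part supported in $\{r<R\}$ and a part supported in $\{r>R\}$, the first becomes compact after two-sided multiplication by $E_I(P)$ (elliptic regularity together with the Rellich compact embedding), while the second has operator norm $O(R^{-\m})$ with $\m=\min_i\m_i$; taking $R$ large then yields, for every $\lambda>0$ and every $\e>0$, a neighbourhood $I$ of $\lambda$ with
\[
E_I(P)\,i[P,A]\,E_I(P)\;\geq\;(2\lambda-\e)\,E_I(P)+K,\qquad K\ \text{compact},
\]
the Mourre estimate, valid throughout $(0,\infty)$ with $\{0\}$ the only threshold. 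The abstract Mourre theory then gives, on every compact interval $I\subset(0,\infty)$, finitely many eigenvalues of finite multiplicity, $\sigma_{sc}(P)\cap I=\emptyset$, and a limiting absorption principle, so that the restriction of $P$ to $I$ away from its eigenvalues is purely absolutely continuous. Letting $I$ exhaust $(0,\infty)$ and invoking (2) on $(-\infty,0)$ makes $\sigma_{pp}(P)$ discrete away from $\{0\}$ and $\sigma_{sc}(P)=\emptyset$; and since $\sigma_{ac}(P)$ is closed and contains $(0,\infty)$ minus a discrete set, $\sigma_{ac}(P)=[0,\infty)$.

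\emph{Where the work lies.} The delicate step is the Mourre estimate: verifying $P\in C^2(A)$ is routine bookkeeping, but controlling the remainder $-2W+i[W,A]$ is not. Since $\m_2$ and $\m_3$ may be arbitrarily close to $0$, the leading terms of this remainder are genuinely second-order operators with merely slowly decaying coefficients, and hence are \emph{not} relatively compact with respect to $P$ on their own; it is precisely the two-sided localization by $E_I(P)$, combined with the Rellich compact embedding on $\{r<R\}$ and the smallness of the coefficients on $\{r>R\}$, that produces a genuinely compact remainder, and carrying this out with care is where the real effort goes --- while also being what keeps the whole proof functional-analytic, with no microlocal input. The same mechanism yields compactness of $(P+i)^{-1}-(P_0+i)^{-1}$, should one prefer to derive $\sigma_{ess}(P)=\sigma_{ess}(P_0)$ from Weyl's theorem.
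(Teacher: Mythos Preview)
Your proposal is correct and, for part~(3), follows the same line as the paper: a Mourre estimate with the generator of radial dilations as conjugate operator, the key input being homogeneity of the model $-\tfrac12(\pa_r^2+r^{-2}Q)$ and the decay of the remainder coefficients from Assumption~A. Where you differ is in parts~(1) and~(2). For~(1) the paper argues directly: it identifies $\mathcal D(P_0)=H^2(M)$ via elliptic regularity, shows $C_0^\infty(M)$ is dense there by radial cutoffs $u_R=(1-j(r/R))u$, and then adds $V$ by Kato--Rellich; you instead invoke completeness of the associated metric and a Chernoff/finite-propagation argument. For~(2) the paper uses Weyl's theorem ($V$ is $P_0$-compact, so $\s_{ess}(P)=\s_{ess}(P_0)$) together with an explicit Weyl sequence for $P_0$; you use Persson's theorem for the lower bound on $\s_{ess}$ and build the Weyl sequence directly for $P$. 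Both routes are standard and sound. The paper's choices are a bit more elementary and self-contained (no propagation-speed machinery, and relative compactness is checked once and reused), whereas your geometric route via completeness and Persson's formula is more portable to other non-compact settings and makes the role of the metric at infinity more transparent. Your closing diagnosis of where the work lies---controlling the non-relatively-compact second-order remainder in the Mourre commutator by splitting at $r=R$ and using energy localization plus Rellich---matches exactly the mechanism the paper relies on.
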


We prove Theorem~1 under a weaker assumption in Section~2. 

In order to construct a time-dependent scattering theory, we define a reference system as follows: 
\begin{align*}
& M_f =\re\times \pa M, \quad \mathcal{H}_f = L^2 (M_f, H(\th) dr\,d\th), \\
& P_f = -\frac12\frac{\pa^2}{\pa r^2} \quad\text{on } M_f.
\end{align*}

It is easy to show that $P_f$ is essentially self-adjoint on $C_0^\infty(M_f)$, and we denote 
the unique self-adjoint extension by the same symbol as before. Let $j(r)\in C^\infty(\re)$ 
such that $j(r)=1$ if $r\geq 1$ and $j(r)=0$ if $r\leq 1/2$. We define $J: \mathcal{H}_f\to
\mathcal{H}$ by 
\[
(J\f)(r,\th) = r^{-(n-1)/2}\, j(r) \, \f(r,\th)
\quad \text{if }(r,\th)\in M_\infty, 
\]
and $J\f(x)=0$ if $x\notin M_\infty$, where $\f\in \mathcal{H}_f$. 
We note if $\supp\f\subset (1,\infty)\times\pa M$, then $\norm{J\f}_\mathcal{H} 
=\norm{\f}_{\mathcal{H}_f}$. We consider the two-space wave operators: 
\[
W_\pm = W_\pm(P,P_f;J) = \slim_{t\to\pm \infty} e^{itP} J e^{-itP_f}.
\]
Note we do not need the projection to the absolutely continuous subspace of $P_f$ 
since $P_f$ is absolutely continuous. We denote the Fourier transform with respect to 
$r$-variable by $\mathcal{F}$:
\[
\mathcal{F} \f(r,\rho) = (2\pi)^{-1/2}\int_{-\infty}^\infty e^{-i r\rho} \f(r,\th) dr, 
\quad \text{for }\f\in C_0^\infty(M_f). 
\]
We decompose the reference Hilbert space $\mathcal{H}_f$ as 
$\mathcal{H}_f = \mathcal{H}_f^+\oplus \mathcal{H}_f^-$, where $\mathcal{H}_f^\pm$ 
are defined by 
\[
\mathcal{H}_f^\pm =\bigset{\f\in\mathcal{H}_f}{\supp(\mathcal{F}\f) 
\subset [0,\infty)\times \pa M}.
\]
Then we have the following existence and the completeness of the wave operators: 

\begin{thm}
(1) The wave operators $W_\pm=W_\pm(P,P_f;J)$ exist. 
Moreover, $W_\pm \mathcal{H}_f^\mp =0$ and $W_\pm\lceil_{\mathcal{H}_f^\pm}$ 
are isometries from $\mathcal{H}_f^\pm$ into $\mathcal{H}$. \newline
(2) $W_\pm$ are complete, i.e., $\Ran W_\pm= \mathcal{H}_{ac}(P)$, where 
$\mathcal{H}_{ac}(\cdot)$ denotes the absolutely continuous subspace. 
\end{thm}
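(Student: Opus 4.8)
The plan is to establish existence (part (1)) via Cook's method and completeness (part (2)) via the standard chain/intertwining argument, reducing everything to the asymptotic behavior of the "effective potential" created by conjugating $P$ with the identification $J$. First I would compute, on $\tilde M_\infty$, the operator $r^{-(n-1)/2}\circ(\text{the }M_\infty\text{-part of }P)\circ r^{(n-1)/2}$ and compare it with $P_f$. Using Assumption~A, the leading term is $-\tfrac12\pa_r^2$, and the remainder splits into: (a) an angular part of size $r^{-2}Q$ plus lower-order corrections from $a_3-h$, $a_2$; (b) first-order-in-$\pa_r$ terms and zeroth-order terms of size $O(r^{-\mu})$ with $\mu>1$ coming from $a_1-1$, the derivatives of $G$, and $V$. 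More precisely, I would show that on $\Ran J$ one has $PJ-JP_f = J B + (\text{compactly supported / smoothing error})$, where $B$ is a sum of terms each of which, composed with $\jap{r}^{-1}$ on the appropriate side or with a smooth function of $P_f$, is locally $P_f$-bounded and integrable in $t$ along the free evolution.

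For existence: I would apply Cook's criterion to $W_\pm = \slim_{t\to\pm\infty} e^{itP}Je^{-itP_f}$. It suffices to show $\int_0^{\pm\infty}\norm{(PJ-JP_f)e^{-itP_f}\f}_{\mathcal H}\,dt<\infty$ for $\f$ in a dense subset of $\mathcal H_f$ — I would take $\f$ with $\mathcal F\f\in C_0^\infty((0,\infty)\times\pa M)$ for $W_+$ (resp.\ supported in $(-\infty,0)$ for $W_-$), so that $e^{-itP_f}\f$ is a superposition of waves $e^{ir\rho-it\rho^2/2}$ with $\rho$ bounded away from $0$. For such $\f$, propagation estimates for the free one-dimensional evolution (the standard $\norm{F(\abs{r}\le c\abs{t})e^{-itP_f}\f}=O(t^{-N})$ away from the classically allowed region, plus $\jap{r}^{-s}$-decay on the allowed region) convert each term of $PJ-JP_f$ into something $O(t^{-\mu})$ or $O(t^{-1-\e})$, hence integrable; the $r^{-2}Q$ angular term is $O(t^{-2})$ on the relevant region since $\rho\ne 0$ forces $r\sim t$. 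The same estimates give $W_\pm\lceil_{\mathcal H_f^\mp}=0$, because on $\mathcal H_f^\mp$ the free evolution concentrates in $\pm r\to -\infty$ where $j(r)=0$, so $Je^{-itP_f}\f\to 0$ strongly; and on $\mathcal H_f^\pm$, $Je^{-itP_f}\f$ has the same norm as $e^{-itP_f}\f$ asymptotically, giving the isometry property together with $\slim(J^*J-1)e^{-itP_f}\f=0$.

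For completeness I would construct the inverse (adjoint) wave operators $\widetilde W_\pm = \slim_{t\to\pm\infty} e^{itP_f}J^* e^{-itP}P_c(P)$, where $P_c(P)$ projects onto $\mathcal H_{ac}(P)$ (using Theorem~1(3), so $\mathcal H_{ac}(P)=\mathcal H_c(P)$ up to the thin set $\{0\}$, and eigenvalues other than $0$ are isolated). Existence of $\widetilde W_\pm$ again follows by Cook's method — now the key inputs are (i) a limiting absorption principle / local decay estimate for $P$ of the form $\int\norm{\jap{r}^{-s}e^{-itP}g(P)\f}^2\,dt<\infty$ for suitable energy cutoffs $g$ away from thresholds, which should follow from the Mourre estimate one expects to have been established in the course of proving Theorem~1, and (ii) a "no low-velocity / no incoming" propagation estimate showing $e^{-itP}$ escapes into $\tilde M_\infty$ with $r\gtrsim\abs{t}$. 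Then $\widetilde W_\pm W_\pm = 1$ on $\mathcal H_f^\pm$ and $W_\pm\widetilde W_\pm = P_c(P)$ follow from the intertwining relations $W_\pm P_f = P W_\pm$, $\widetilde W_\pm P = P_f\widetilde W_\pm$ together with a density argument, yielding $\Ran W_\pm=\mathcal H_{ac}(P)$. The main obstacle is step (i): obtaining the propagation/local-decay estimates for $e^{-itP}$ uniformly down to the threshold $0$ and near the top of any discrete spectrum, purely functional-analytically without microlocal machinery. I expect this to rest on a Mourre estimate with conjugate operator essentially $\tfrac12(r\pa_r+\pa_r r)$ (the generator of dilations in $r$), whose commutator with $P$ is positive modulo $\jap{r}^{-\mu}$-terms by Assumption~A; the delicate points are handling the angular operator $Q$ (whose bottom eigenvalue may be negative) in the commutator and controlling the region $r\lesssim 1$ where the metric is arbitrary, both of which I would treat by a partition of unity and the relative compactness of the "interior" part.
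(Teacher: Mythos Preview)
Your existence argument (part~(1)) is essentially the paper's: Cook--Kuroda on a dense set of $\f$ with $\mathcal{F}\f\in C_0^\infty$ supported away from $\rho=0$, using non-stationary phase for $e^{-itP_f}$ together with the decay of the coefficients of $T=PJ-JP_f$.

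For completeness (part~(2)) you take a different route from the paper, which does \emph{not} build inverse wave operators by a second Cook estimate or by Enss-type propagation; instead it uses the abstract stationary scattering theory of Kato--Kuroda, verifying a two-space version of the hypotheses via the limiting absorption principle plus a weighted resolvent bound. The paper remarks that a time-dependent proof ``should work as well,'' so your overall strategy is not unreasonable, but your sketch has a genuine gap. You propose to run Cook's method for $\widetilde W_\pm$ using only the Mourre-derived local decay $\int\norm{\jap{r}^{-s}e^{-itP}g(P)\f}^2\,dt<\infty$. That is not enough: $T^*$ contains the angular piece coming from $a_3\,r^{-2}\pa_\th^2$, and the angular operator $\tilde Q=j(r)Qj(r)$ is \emph{not} $P$-bounded, so spatial decay in $\jap{r}^{-s}$ alone does not control $\norm{T^* e^{-itP}\psi}$. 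This is precisely the technical crux the paper isolates. Its key new estimate (Lemma~10) is
\[
\sup_{z\sim E}\Bignorm{\jap{r}^{-s}\tilde Q\,(P-z)^{-1}(1+\tilde Q)^{-1}\jap{r}^{-s}}<\infty,
\]
obtained by a bootstrap: one expands $[P,\tilde Q]$, and the dangerous pieces (of type $\pa_r^2\pa_\th$, $\pa_r\pa_\th^2$, $\pa_\th^3$) are reabsorbed into the quantity being estimated with a small constant after cutting off to $r\geq R$ large. This is what permits the paper to work with auxiliary spaces $\mathcal{Y}_j=\jap{r}^{-s}(1+\tilde Q)^{-1}\mathcal{H}_j$, not merely $\jap{r}^{-s}\mathcal{H}_j$, and to close the stationary argument. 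A time-dependent proof would need an analogous device --- either propagation of angular regularity along $e^{-itP}$, or a minimal-velocity estimate strong enough to convert $r^{-2}\pa_\th^2$ into an integrable $t^{-2}$ in the relevant norm --- and your plan does not supply one. Your remark that handling $Q$ is delicate ``in the commutator'' misplaces the difficulty: the Mourre commutator estimate is routine here; what is hard is carrying the angular derivatives through $(P-z)^{-1}$ (equivalently, through $e^{-itP}$).
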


We prove Theorem~2 in Section~3. 

Now we can define the scattering operator by 
\[
S= W_+^* W_-\ :\ \mathcal{H}_f^- \to \mathcal{H}_f^+, 
\]
and $S$ is unitary. By the intertwining property: $PW_\pm =W_\pm P_f$ and
the fact that $\mathcal{F}$ is a spectral representation of $P_f$, we learn that there 
exist unitary operators $S(\l)$ on $\mathcal{H}_b$ for a.e.\ $\l>0$ such that 
\[
(\mathcal{F} S\f)(\rho,\th) = (S(\rho^2/2)(\mathcal{F}\f)(-\rho,\cdot))(\th) 
\quad \text{for }\rho>0, \ \f\in \mathcal{H}_f^-. 
\]
$S(\l)$ is called the {\it scattering matrix}, and we will see that it is equivalent to the 
so-called {\it absolute scattering matrix}, which is defined in terms of the asymptotic expansion 
of generalized eigenfunctions as $r\to\infty$. We discuss these properties of the 
scattering matrix in Section~4. 

\medskip
Scattering theory for Schr\"odinger operators has long history, and huge amount of 
work have been done, both in abstract setting and for concrete models (see, e.g., 
\cite{RS} Vol.~III or \cite{Yaf} and refererences therein). 
Most models of scattering theory are concerned with operators on a Euclidean 
space, or on the product space of a compact space and a Euclidean space, 
though there have been several works on asymptotic expansion of solutions 
on certain manifolds (e.g., Wang \cite{Wa}).

 In early 1990's, 
Melrose introduced a new framework of scattering theory on a class of Riemannian 
manifolds with metrics called {\em scattering metrics}\/ (\cite{Mel1}, see also \cite{Mel2, 
Va, HaVa1, HaVa2, Jo, ChJo}). 
He and the other  authors have studied Laplace-Beltrami operators on such manifolds 
using a pseudodifferential operator calculus, which is sometimes called the 
{\em scattering calculus}. They have also studied the absolute scattering 
matrix, which is defined through the asymptotic expansion of generalized eigenfunctions. 
However, as far as the authors are aware of, there have been no works on 
time-dependent scattering theory on manifolds with scattering metrics, 
most likely because there is no obvious choice of the {\em free}\/ Schr\"odinger 
operators. 

When the authors worked on the microlocal singularities of solutions to Schr\"odinger 
operators on scattering manifolds (\cite{ItNa1}, see also Hassell-Wunsch \cite{HaWu}), 
we constructed a classical scattering theory on scattering manifolds, and also 
a reference quantum system to describe the propagation of singularities. 
It turns out that this idea can be used to construct a time-dependent 
quantum scattering theory, which is the subject of this paper. As we discuss in 
Section~4, the (usual) scattering matrix in our framework is equivalent to the 
absolute scattering matrix, and we are currently working on an alternative proof of the 
Melrose-Zworski theorem on the microlocal properties of the scattering 
matrix (\cite{MZ}). 

Our method is quite simple and flexible, and hence we expect it can be 
applied to various models, including hyperbolic manifolds and non-compact 
Riemannian manifolds with polynomial growth at infinity. 
We may also consider scattering theory not only for functions, but also
for differential forms. About another direction of generalization, 
it should be possible to consider more general perturbations, 
including long-range potentials and magnetic fields. 
We expect many methods in the traditional scattering theory can be 
applied to operators on scattering manifolds (or more general 
non-compact manifolds), and there remain various problems 
to be addressed. 

\medskip 
We use the following notation throughout the paper: 
The definition domain of an operator $A$ is denoted by $\mathcal{D}(A)$. 
$\mathcal{L}(X,Y)$ denotes the Banach space of the bounded operators 
from a norm space $X$ to a Banach space $Y$, and 
$\mathcal{L}(X)=\mathcal{L}(X,X)$. For a self-adjoint operator $A$, 
$E_A(\cdot)$ denotes the spectral measure of $A$. 
For $x\in\re^m$, $m>0$, we write $\jap{x}=\sqrt{1+|x|^2}$. For $x\in M$, 
we also write
\[
\jap{x}=\jap{r}= \begin{cases} 1+r j(r) \quad&\text{for } x\in M_\infty, \\
1 &\text{for } x\in M_c. \end{cases}
\]
$\chi_\Omega(x)$ denotes the indicator function of $\Omega$, i.e., 
$\chi_\Omega(x)=1$ for $x\in\Omega$, and $=0$ for $x\notin \Omega$. 
The positive and negative real axes are denoted by $\re_+=(0,\infty)$ and
$\re_-=(-\infty,0)$, respectively. The set of the nonnegative integers is 
denoted by $\ze_+$, and $\ze_+^n$ denotes the set of the multi-indeces.


\section{Self-adjointness and spectral properties}

The spectral properties of Schr\"odinger operators on scattering manifolds 
as described in Theorem~1 seem to be more or less known, 
but here we give proof for the sake of completeness. 

Let $P_0$ be the second order elliptic operator on $\mathcal{H}$  satisfying 
Assumption~A. We define 
\[
H^2(M) = \mathcal{D}(P_0) = \bigset{\f\in\mathcal{H}}{P_0\f\in\mathcal{H}}
\]
with its graph norm, where $P_0$ acts on elements of $\mathcal{H}$ in the distribution sense. 
This definition is independent of the choice of $P_0$ up to equivalence, as long as the coefficients of 
$P_0$ satisfy Assumption~A. Typically we may choose $a_1\equiv 1$, $a_2\equiv 0$ and 
$a_3\equiv h$ on $\tilde M_\infty$, and extend the coefficients smoothly so that 
$P_0$ is globally elliptic. 

\begin{prop}
$P_0$ is self-adjoint with $\mathcal{D}(P_0)=H^2(M)$, and $C_0^\infty(M)$ is a core for $P_0$. 
In particular, $P_0$ is essentially self-adjoint on $C_0^\infty$. 
\end{prop}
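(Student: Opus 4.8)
The plan is to identify the closure of $P_0|_{C_0^\infty(M)}$ with the \emph{maximal} operator $P_{0,\mathrm{max}}$, i.e.\ the distributional action of $P_0$ on $H^2(M)=\{f\in\mathcal H: P_0f\in\mathcal H\}$. Since $P_0$ is formally self-adjoint with respect to $G\,dx$, the Hilbert-space adjoint of $P_0|_{C_0^\infty(M)}$ is exactly $P_{0,\mathrm{max}}$; hence the single statement $\overline{P_0|_{C_0^\infty(M)}}=P_{0,\mathrm{max}}$ already yields that $P_0$ is self-adjoint, that $\mathcal D(P_0)=H^2(M)$ (by definition of the latter), and that $C_0^\infty(M)$ is a core. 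As the inclusion $\overline{P_0|_{C_0^\infty(M)}}\subset P_{0,\mathrm{max}}$ is immediate, everything reduces to showing: every $u$ with $u,P_0u\in\mathcal H$ is a limit, in the graph norm of $P_0$, of functions in $C_0^\infty(M)$.

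I would first record two preliminaries. By global ellipticity of $P_0$ and interior elliptic regularity, such a $u$ already lies in $H^2_{\mathrm{loc}}(M)$. And by Assumption~A the matrix $\begin{pmatrix}a_1 & a_2\\ {}^t a_2 & a_3\end{pmatrix}$ is uniformly close to $\mathrm{diag}(1,h)$ as $r\to\infty$, hence uniformly positive definite there; together with global ellipticity this makes $P_0$ uniformly elliptic with respect to the conic gradient $\nabla_c u=(\partial_r u,\partial_\theta u/r)$ and the complete end metric $dr^2+r^2h$, while $V$ is bounded.

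The core of the argument is a cutoff/energy estimate. Fix $\chi\in C_0^\infty([0,\infty))$, non-increasing, $\chi\equiv1$ near $0$, and let $\chi_R$ equal $\chi(r/R)$ near infinity and $1$ on $M_c$, so that $\chi_R$ is supported in $\{r\le 2R\}$ with $|\partial_r\chi_R|\le C/R$ and $|\partial_r^2\chi_R|\le C/R^2$. Pairing $P_0u$ with $\chi_R^2u\in H^1_{\mathrm{comp}}(M)$, integrating by parts, using uniform ellipticity and absorbing the cross term produced by $\partial_r\chi_R$, one obtains an $R$-independent bound
\[
\int_M \chi_R^2\,\langle A\nabla_c u,\nabla_c u\rangle\,G\,dx\ \le\ C\bigl(\|P_0u\|_{\mathcal H}\,\|u\|_{\mathcal H}+\|u\|_{\mathcal H}^2\bigr),
\]
so that, letting $R\to\infty$, $\nabla_c u\in L^2(M)$; in particular $\partial_r u\in L^2(M)$. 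Now put $u_R=\chi_R u$. Then $u_R\to u$ in $\mathcal H$, and $P_0u_R=\chi_R P_0u+[P_0,\chi_R]u$, where $[P_0,\chi_R]$ is a first-order operator with coefficients supported in $\{R\le r\le 2R\}$ carrying \emph{no} angular derivative (as $\chi_R$ depends on $r$ only), whence
\[
\|[P_0,\chi_R]u\|_{\mathcal H}\ \le\ \frac{C}{R}\,\|\partial_r u\|_{L^2(\{R\le r\le 2R\})}+\frac{C}{R^2}\,\|u\|_{L^2(\{R\le r\le 2R\})}\ \longrightarrow\ 0.
\]
Hence $u_R\to u$ in the graph norm. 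Finally, each $u_R$ is compactly supported with $P_0u_R\in\mathcal H$, so $u_R\in H^2(M)$ with compact support; mollifying in local charts at scale small relative to $R$ (the coefficients of $P_0$ being smooth and bounded on the fixed compact set $\{r\le 2R\}$) produces $C_0^\infty(M)$ functions converging to $u_R$ in $H^2$, hence in the graph norm, which closes the argument.

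The main obstacle is the $R$-uniformity of the displayed energy estimate together with the vanishing of $[P_0,\chi_R]u$: this is precisely where the metric completeness of the end $r\to\infty$ enters (for an incomplete end essential self-adjointness would in general fail), through the decay of $|\partial_r\chi_R|$ and the uniform ellipticity of $\begin{pmatrix}a_1 & a_2\\ {}^t a_2 & a_3\end{pmatrix}$. Under Assumption~A the potential $V$ is bounded and contributes only the harmless term $C\|u\|_{\mathcal H}^2$; for the weaker hypotheses announced for Section~2 it is enough that $V$ be infinitesimally form-bounded relative to the principal part, its $\varepsilon$-part then being absorbed on the left together with the cutoff cross term. (Alternatively, since the principal symbol of $P_0$ corresponds to a complete metric, one may deduce essential self-adjointness on $C_0^\infty(M)$ from Chernoff's finite-propagation-speed method, the identification $\mathcal D(P_0)=H^2(M)$ still following from the energy estimate and elliptic regularity.)
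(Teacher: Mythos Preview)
Your argument is correct and follows essentially the same route as the paper's: elliptic regularity gives $u\in H^2_{\mathrm{loc}}$; then one shows the radial gradient is globally $L^2$ (the paper simply writes ``by a simple interpolation, $\partial_r u\in L^2(M_\infty)$'', whereas your energy estimate with test function $\chi_R^2 u$ spells this out more carefully); then a radial cutoff followed by local mollification finishes.

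One small correction: the claim that $[P_0,\chi_R]$ carries \emph{no} angular derivative is not quite right. The off-diagonal entry $a_2$ of the coefficient matrix produces terms of the form $\chi_R'\, a_2\,\partial_\theta u/r$ (and the symmetric contribution from ${}^t a_2$), so your displayed bound should also contain a term $\tfrac{C}{R}\,\|\partial_\theta u/r\|_{L^2(\{R\le r\le 2R\})}$. This does not affect the conclusion, since your energy estimate already yields the full conic gradient $\nabla_c u\in L^2$, not only its radial component.
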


\begin{proof}
We first show $C_0^\infty(M)$ is a core. Let $u\in H^2(M)$. Then by the 
elliptic regularity theorem, we learn $u\in H^2_{loc}(M)$ in the sense of the usual Sobolev spaces. 
Then by a simple interpolation, we have $\pa_r u\in L^2(M_\infty)$. We now set
\[
u_R =(1-j(r/R))u, 
\]
where $j(r)$ is the smooth cut-off function in the previous section (we again identify $M_\infty$ with 
$\re_+\times \pa M$). Then it is easy to show
\[
\norm{u-u_R}_\mathcal{H} \to 0, \quad \norm{P_0(u-u_R)}_\mathcal{H} \to 0
\quad \text{as }R\to\infty.
\]
Thus for any $\e>0$, $\norm{u-u_R}_{H^2(M)}<\e/2$ provided $R$ is sufficiently large. 
By the well-known fact that $C_0^\infty(K)$ is dense in $H^2(K)$ for $K\subset\subset M$, 
we learn there is $u'_\e\in C_0^\infty(M)$ such that $\norm{u_R-u'_\e}_{H^2(M)}<\e/2$, 
and hence $\norm{u-u'_\e}_{H^2(M)}<\e$. 

By the definition, we easily see $\mathcal{D}(P_0^*)\subset H^2(M)= \mathcal{D}(P_0)$, and 
hence it remains only to show $P_0$ is symmetric. If $u,v\in C_0^\infty(M)$, then 
$(P_0u,v)=(u,P_0v)$ follows from the Stokes theorem (and the the formal self-adjointness of $P_0$). 
Then by the density argument and the result above, 
we conclude the equality holds for any $u,v\in \mathcal{D}(P_0)$. 
\end{proof}

We now apply the standard perturbation theory for self-adjoint operators, 
i.e., the Kato-Rellich theorem and the Weyl theorem on the essential spectrum
(see, e.g., \cite{RS} Section~X.2 and Section~XIII.4, respectively). 

\begin{cor}
(1) Suppose $V$ is a symmetric operator on $\mathcal{H}$, and $V$ is $P_0$-bounded with 
the relative bound less than one, i.e., there exist $\b<1$ and $C>0$ such that 
\[
\norm{Vu}_\mathcal{H} \leq \b \norm{P_0 u}_{\mathcal{H}} + C\norm{u}_\mathcal{H}
\quad \text{for }u\in H^2(M). 
\]
Then $P=P_0+V$ is self-adjoint with $\mathcal{D}(P)=H^2(M)$, and $C_0^\infty(M)$ is a core. 
Moreover, $P$ is bounded from below. In particular, if $V=V_1+V_2$ where $V_1$ is compactly 
supported, $V_1\in L^p(M,Gdx)$ with $p>n/2$, $p\geq 2$, and $V_2\in L^\infty(M,Gdx)$, 
then $P$ is self-adjoint. \newline
(2) Suppose, in addition to the conditions above, $V$ is $P_0$-compact, i.e., if $V$ is a 
compact operator from $H^2(M)$ to $\mathcal{H}$, then $\s_{ess}(P)=\s_{ess}(P_0)
=\re_+$. In particular, the condition holds if $V=V_1+V_2$ as in (1) and 
$V_2(x)\to 0$ as $r\to\infty$. 
\end{cor}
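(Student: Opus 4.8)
The statement is Corollary~3, which packages two separate assertions: (1) self-adjointness of $P=P_0+V$ under a relative boundedness hypothesis, plus a concrete sufficient condition for that hypothesis; and (2) invariance of the essential spectrum under $P_0$-compact perturbations, again with a concrete sufficient condition. The plan is to reduce each part to a standard abstract theorem and then verify the concrete local-$L^p$ conditions imply the abstract hypotheses.

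For part (1), once $V$ is symmetric and $P_0$-bounded with relative bound $\beta<1$, the Kato--Rellich theorem (see \cite{RS} Section~X.2) immediately gives that $P=P_0+V$ is self-adjoint on $\mathcal{D}(P)=\mathcal{D}(P_0)=H^2(M)$ and bounded from below; moreover, since $P_0$ is essentially self-adjoint on $C_0^\infty(M)$ by Proposition~2 and $V$ is $P_0$-bounded, any core for $P_0$ is a core for $P$, so $C_0^\infty(M)$ is a core for $P$. The remaining content is the claim that $V=V_1+V_2$ with $V_1$ compactly supported, $V_1\in L^p(M,G\,dx)$ for some $p>n/2$, $p\geq 2$, and $V_2\in L^\infty$, yields relative bound zero (hence $<1$). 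Here $V_2$ is bounded, so it is trivially $P_0$-bounded with relative bound $0$; the work is in $V_1$. I would argue locally: on the compact support $K$ of $V_1$, elliptic regularity (as already invoked in Proposition~2) identifies $H^2(M)$-functions restricted to a neighborhood of $K$ with ordinary $H^2_{loc}$ functions, and on a bounded region one has the Sobolev embedding $H^2\hookrightarrow L^q$ for the appropriate $q$ together with the Hölder estimate $\norm{V_1 u}_{L^2}\leq \norm{V_1}_{L^p}\norm{u}_{L^q}$; combining these with the standard interpolation trick $\norm{u}_{L^q}\leq \e\norm{u}_{H^2}+C_\e\norm{u}_{L^2}$ (valid because $p>n/2$ puts $q$ strictly below the critical exponent) gives relative bound $0$. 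This is the ``Kato class''-type estimate for $\re^n$ transplanted to the manifold via a local chart, and the condition $p>n/2$, $p\geq 2$ is exactly what makes it go through.

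For part (2), the abstract input is the Weyl theorem on stability of the essential spectrum under relatively compact perturbations (see \cite{RS} Section~XIII.4): if $V$ is $P_0$-compact (i.e.\ compact as a map $H^2(M)\to\mathcal{H}$), then $\sigma_{ess}(P)=\sigma_{ess}(P_0)=\re_+$, where the last equality is Theorem~1(2), proven separately. So part (2) is immediate from the abstract theorem once $P_0$-compactness is granted. The concrete sufficient condition is $V=V_1+V_2$ as in (1) with additionally $V_2(x)\to 0$ as $r\to\infty$. For $V_1$ (compactly supported, local $L^p$) the map $H^2(M)\to\mathcal{H}$ factors as: restrict to a neighborhood of $\supp V_1$, embed $H^2$ into $L^q$ compactly (Rellich--Kondrachov on a bounded domain), then multiply by $V_1\in L^p$ into $L^2$; the composition is compact. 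For $V_2$, decay at infinity plus boundedness gives $P_0$-compactness by the usual approximation: write $V_2=\chi_{\{r\leq R\}}V_2+\chi_{\{r>R\}}V_2$; the first term is like a $V_1$ (bounded with compact support, hence $P_0$-compact by the $L^\infty\subset L^p_{loc}$ case of the $V_1$ argument), and the second has small sup norm, so it is small in operator norm from $H^2(M)$ to $\mathcal{H}$; a norm limit of compact operators is compact.

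The main obstacle, and the only place requiring genuine care rather than citation, is making the local Sobolev/Rellich arguments rigorous on the manifold: one must choose finitely many coordinate charts covering the (relatively compact) support regions, use a partition of unity, and check that the $H^2(M)$ graph norm controls the local ordinary $H^2$ norms uniformly — this is precisely the elliptic regularity statement already used in the proof of Proposition~2, so it is available, but it must be applied with the second-derivative estimates (not just $\pa_r u\in L^2$) and combined with compactness of the embedding on each chart. Everything else is a direct appeal to Kato--Rellich, the Weyl essential-spectrum theorem, and Theorem~1(2).
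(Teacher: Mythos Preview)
Your proposal is correct and follows exactly the approach the paper takes: the paper's entire ``proof'' of this corollary is the one-sentence remark preceding it that one applies the Kato--Rellich theorem and the Weyl theorem on the essential spectrum (citing \cite{RS} Sections~X.2 and XIII.4), together with the observation afterwards that $\sigma_{ess}(P_0)=[0,\infty)$ follows from $P_0\geq 0$ and a Weyl sequence. You have in fact supplied more detail than the paper does, since the paper does not spell out why the concrete $L^p$ condition on $V_1$ yields relative bound (or compactness) --- it simply asserts this and remarks that the condition is not optimal, pointing to the Stummel class in \cite{CFKS}.
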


We can easily show  $\s_{ess}(P_0)=[0,\infty)$ since $P_0\geq 0$ and 
it is straightforward to construct a Weyl sequence to show $E\in \s_{ess}(P_0)$ for $E>0$. 

The sufficient condition on $V$ given in Corollary 4 is not optimal. It would be appropriate 
to use a generalization of the Stummel class (see, e.g.,  \cite{CFKS} Section~1.2), 
though we do not investigate it in this paper. 

We next study the spectral properties of $P$ using the Mourre theory \cite{Mo}. 
Let $R>0$, which will be fixed later. Let $X$ be a vector field on $M$ such that 
\[
X=j(\tfrac{r}{R})\, r\, \frac{\pa}{\pa r} \quad \text{on } M_\infty, 
\]
and $X=0$ on $M_c$. Let $\{\exp[tX]\,|\,t\in\re\}$ be the flow on $M$ generated by $X$. 
The flow induces a one-parameter unitary group defined by 
\[
U(t)\f(x) = \F(t,x) \f(\exp[-tX](x)) \quad \text{for }\f\in \mathcal{H},
\]
where $\F(t,x)$ is a weight function to make $U(t)$ unitary. Let $A$ be the generator 
of $U(t)$, i.e., $U(t) =e^{itA}$. By straightforward computation, we learn 
\begin{align*}
A &= \frac{1}{2i} G(x)^{-1/2} \biggpare{j(\tfrac{r}{R}) \, r \, \frac{\pa}{\pa r} 
+ \frac{\pa}{\pa r} j(\tfrac{r}{R})\,r} G(x)^{1/2} \\
&= \frac{1}{2i} \biggpare{j(\tfrac{r}{R})\, r\, \frac{\pa}{\pa r}
+ \frac{\pa}{\pa r} j(\tfrac{r}{R})\, r + \frac{n-1}{2} j(\tfrac{r}{R})}
\end{align*}
on $M_\infty$, and $A=0$ on $M_c$. 

\begin{ass} $V=V_1+V_2$ such that $V_1$ is compactly supported, $V_1\in L^p(M,Gdx)$ 
with $p>n/2$, $p\geq 2$, and $V_2\in C^2(M)$ such that 
\[
\bigabs{\pa_r^\ell V(r,\th)} \leq C r^{-\ell-\n}, 
\quad (r,\th)\in \tilde M_\infty, j=0,1,2, 
\]
with $\n>0$, $C>0$. 
\end{ass}

We now fix $R>0$ so that $j(\frac{r}{R})V_1(r,\th)\equiv 0$. Then $P$ and $A$ satisfy the conditions 
to apply the Mourre theory:

\begin{lem}
Let $P=P_0+V$ where $P_0$ satisfies Assumption~A and $V$ satisfies Assumption~B. Then: 
\newline
(a) $C_0^\infty(M)\subset \mathcal{D}(A)\cap \mathcal{D}(P)$ and $C_0^\infty(M)$ is a core 
for $P$. \newline
(b) $U(t)=e^{itA}$ leaves $\mathcal{D}(P)$ invariant, and 
\[
\sup_{|t|<1} \norm{P e^{itA}u}_\mathcal{H}<\infty \quad \text{for }u\in\mathcal{D}(P).
\]
(c) $i[P,A]$ is extended to a semi-bounded self-adjoint operator, and it is $P$-bounded. \newline
(d) $[[P,A],A]$ is bounded from $H^2(M)$ to $H^2(M)^*$. \newline
(e) Let $I\subset \re_+$. Then there exist $\b>0$ and a compact operator $K$ on $\mathcal{H}$ 
such that 
\[
\chi_I(P) i[P,A]\chi_I(P) \geq \b \chi_I(P) + \chi_I(P) K\chi_I(P), 
\]
where $\chi_I$ is the indicator function of $I$. 
\end{lem}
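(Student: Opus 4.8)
The plan is to verify the five Mourre conditions (a)--(e) by direct computation, using the explicit form of $A$ on $M_\infty$ and Assumptions~A and~B. The key preliminary observation is that, modulo lower-order terms controlled by the decay in Assumption~A, the commutator $i[P,A]$ should be close to $2P$ (more precisely to $-\pa_r^2 + \frac{1}{r^2}Q$ plus the curvature/potential corrections), which is exactly the structure that makes the Mourre estimate work on $\re_+$; this is the same mechanism as the classical generator of dilations for Schr\"odinger operators on $\re^n$.

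First I would dispose of (a) and (b). Since $A$ on $M_\infty$ is a first-order differential operator with coefficients $j(r/R)r$ and its derivatives (hence locally bounded and smooth), $C_0^\infty(M)\subset\mathcal{D}(A)$ is clear, and $C_0^\infty(M)\subset\mathcal{D}(P)$ together with its being a core follows from Corollary~4(1) under Assumption~B. For (b), I would note that the flow $\exp[tX]$ is, on $\tilde M_\infty$, essentially the dilation $r\mapsto e^t r$ (cut off near $r=R$), so conjugating $P$ by $U(t)$ produces an operator of the same form with rescaled coefficients; Assumption~A is scale-covariant enough that the conjugated coefficients stay uniformly bounded for $|t|<1$, giving $\sup_{|t|<1}\norm{Pe^{itA}u}<\infty$ via a Gronwall-type argument on $\norm{Pe^{itA}u}$.

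For (c) and (d) the work is the commutator computation. I would compute $i[P,A]$ on $C_0^\infty(M_\infty)$ using the explicit forms of $P$ and $A$: the leading part gives, up to constants, $-\pa_r^2 + \frac{1}{r^2}Q$ acting between the metric coefficients $a_1,a_2,a_3$, plus terms where $\pa_r$ or $r\pa_r$ hits the coefficients $a_i-1$, $a_i-h$, $V$ and the cutoff $j(r/R)$. By Assumption~A each such "error" term carries an extra factor $r^{-\mu_i}$ or is supported in a compact region, so $i[P,A]$ extends to a $P$-bounded (indeed semi-bounded, since its leading part is $\sim 2P\ge$ const) self-adjoint operator. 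For (d) one iterates: $[[P,A],A]$ involves two derivatives of coefficients, still $P$-bounded, hence bounded $H^2(M)\to H^2(M)^*$. The point here is purely bookkeeping of which terms decay and which are compactly supported; no single step is deep.

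The main obstacle is (e), the strict Mourre estimate on any compact $I\subset\re_+$. Here I would write $i[P,A] = 4P + B$ where $B$ collects (i) the genuinely decaying remainders $O(r^{-\mu})$ with $\mu>0$, (ii) the terms supported near $r\sim R$ and on $M_c$, and (iii) the $\frac{1}{r^2}Q$-type angular piece. Sandwiching with $\chi_I(P)$: the term $4\chi_I(P)P\chi_I(P)\ge 4(\inf I)\chi_I(P)$ gives the positive lower bound $\b=4\inf I>0$. For the remainder, I would show $\chi_I(P) B \chi_I(P)$ is compact: each piece of $B$ is of the form (decaying-or-compactly-supported function)$\times$(at most second-order operator), and $\chi_I(P)$ combined with the decay factor $\jap{r}^{-\delta}$ — using that $\jap{r}^{-\delta}(P+i)^{-1}$ is compact on $\mathcal{H}$, which follows from $\jap{x}^{-1}$ being relatively compact, itself a consequence of the asymptotically conic structure and Rellich's theorem on the compact core — yields compactness. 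The delicate point is the angular term $\frac{1}{r^2}Q$: since $Q\ge 0$ it does not spoil semi-boundedness, and the $r^{-2}$ factor makes $\chi_I(P)\frac{1}{r^2}Q\,\chi_I(P)$ compact by the same relative-compactness argument (noting $Q(P+i)^{-1}$ is bounded on $M_\infty$ by ellipticity). Assembling these, $\chi_I(P)i[P,A]\chi_I(P)\ge\b\chi_I(P)+K$ with $K$ compact, and absorbing $K$ into the stated form $\chi_I(P)K\chi_I(P)$ completes the proof.
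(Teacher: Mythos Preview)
The paper does not actually give a proof of this lemma; it is stated as a routine verification of the Mourre hypotheses and then used to deduce Theorem~6. Your outline is the standard one and is essentially what the authors have in mind, so in that sense your approach matches the paper's (implicit) argument.

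That said, there are two points in your sketch that need correcting. First, the scaling constant is $2$, not $4$: for the model operator $-\tfrac12\pa_r^2+\tfrac{1}{r^2}Q$ and the dilation generator $A$, one has $i[P,A]\approx 2P$ on the region where $j(r/R)=1$. This is harmless for the argument (you still get $\b=2\inf I>0$), but worth getting right.

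Second, and more substantively, your treatment of the angular piece is muddled and rests on a false claim. You write that ``$Q(P+i)^{-1}$ is bounded on $M_\infty$ by ellipticity''; this is \emph{not} true, and the paper explicitly remarks (just before Lemma~10) that $\tilde Q$ is not $P$-bounded. Fortunately you do not actually need this. If you organize the commutator as $i[P,A]=2P+B$, the leading angular contribution $2\cdot\tfrac{1}{r^2}Q$ is already contained in $2P$, and the genuine angular remainders coming from $\pa_r a_3$ are of the form $O(r^{-\m_3})\cdot\tfrac{1}{r^2}\pa_\th^2$, which \emph{are} $P$-compact because of the extra decay. Alternatively, if you prefer to split off the radial kinetic part and keep $\tfrac{1}{r^2}Q$ as a separate term, then simply use $Q\ge0$ to drop it as a nonnegative contribution to the lower bound, as you yourself note; there is no need to argue compactness for it. Either route closes the argument for (e) without invoking boundedness of $Q(P+i)^{-1}$.
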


Then we obtain the following spectral properties of $P$: 

\begin{thm}
Let $P=P_0+V$ where $P_0$ satisfies Assumption~A and $V$ satisfies Assumption~B. 
Then \newline
(1) The point spectrum of $P$ is discrete in $\re_+$. \newline
(2) Let $I\subset \re_+\setminus \s_{pp}(P)$. Then 
\[
\sup_{\Re \! z\in I, \Im \! z\neq 0} \norm{\jap{A}^{-1} (P-z)^{-1} \jap{A}^{-1}} <\infty.
\]
In particular, $\s_{sc}(P)=\emptyset$. \newline
(3) Let $I$ as above, and let $s>1/2$. Then 
\[
\jap{r}^{-s} (P-E\pm i0)^{-1}\jap{r}^{-s} 
=\lim_{\e\to+0}\jap{r}^{-s} (P-E\pm i\e)^{-1}\jap{r}^{-s}
\]
converges uniformly in $E\in I$. 
\end{thm}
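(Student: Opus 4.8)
The plan is to deduce all three statements from Lemma~5 via the abstract conjugate operator (Mourre) method: parts (1) and (2) directly, and part (3) by transferring the $\jap{A}$-weighted resolvent bounds into $\jap{r}$-weighted ones. Indeed, Lemma~5 records exactly the hypotheses of Mourre theory --- (a) an invariant core, (b)--(d) the $C^2$-regularity of $P$ with respect to $A$, and (e) the strict Mourre estimate with compact remainder on every compact $I\subset\re_+$ --- so that by \cite{Mo} (see also \cite{CFKS}) the virial theorem combined with (e) forces the point spectrum of $P$ in any compact subset of $\re_+$ to consist of finitely many eigenvalues of finite multiplicity, which is (1); and for compact $I\subset\re_+\setminus\s_{pp}(P)$ one gets $\sup_{\Re z\in I,\ \Im z\ne0}\norm{\jap{A}^{-1}(P-z)^{-1}\jap{A}^{-1}}<\infty$, whence, as such $I$ exhaust $\re_+\setminus\s_{pp}(P)$, $\s_{sc}(P)=\emptyset$, which is (2). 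Below I also use the standard refinement that, under the $C^2$-hypothesis, for every $s>\tfrac12$ the boundary values $\jap{A}^{-s}(P-E\pm i0)^{-1}\jap{A}^{-s}$ exist and are H\"older continuous in $E$, uniformly on each such $I$.

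For part (3), fix compact $I\subset\re_+\setminus\s_{pp}(P)$; we may assume $\tfrac12<s\le1$, since for $s>1$ one factors $\jap{r}^{-s}=\jap{r}^{-(s-1)}\jap{r}^{-1}$ and the extra commuting multipliers are harmless. Choose $\tilde\chi\in C_0^\infty(\re_+)$ with $\tilde\chi\equiv1$ near $\bar I$; a routine functional-calculus computation shows that $(P-z)^{-1}-\tilde\chi(P)(P-z)^{-1}\tilde\chi(P)$ is uniformly bounded and norm-continuous up to $\Im z=0$ for $\Re z\in I$, so it is enough to treat $\jap{r}^{-s}\tilde\chi(P)(P-z)^{-1}\tilde\chi(P)\jap{r}^{-s}$. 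The crux is the bound
\[
\jap{A}^{s}\,\tilde\chi(P)\,\jap{r}^{-s}\in\bounded(\mathcal{H}),\qquad \tfrac12<s\le1.
\]
To prove it, note from the explicit formula for $A$ that on $M_\infty$ the operator $\jap{r}^{-1}A$ --- and likewise $A\jap{r}^{-1}$ --- is a first-order differential operator in $\pa_r$ alone with bounded coefficients, hence $P_0$-bounded; composing with $\tilde\chi(P)\colon\mathcal{H}\to\mathcal{D}(P)$ and using the bounded commutators $[A,\tilde\chi(P)]$ (Helffer--Sj\"ostrand formula with Lemma~5(b),(c)) and $[A,\jap{r}^{-1}]$ (direct computation), one obtains $\jap{A}\,\tilde\chi(P)\,\jap{r}^{-1}\in\bounded(\mathcal{H})$; interpolating in $s$ between this and the trivial case $s=0$ gives the displayed bound.

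Granting the crux, for $\phi,\psi\in C_0^\infty(M)$ (for which $\jap{A}^{s}\tilde\chi(P)\phi=(\jap{A}^{s}\tilde\chi(P)\jap{r}^{-s})(\jap{r}^{s}\phi)$ is well defined), and since $\jap{A}^{-s}$ and $\tilde\chi(P)$ are self-adjoint,
\[
\bigjap{\phi,\ \tilde\chi(P)(P-z)^{-1}\tilde\chi(P)\psi}
=\bigjap{\jap{A}^{s}\tilde\chi(P)\phi,\ \jap{A}^{-s}(P-z)^{-1}\jap{A}^{-s}\,\jap{A}^{s}\tilde\chi(P)\psi},
\]
so the left side is bounded by $C\norm{\jap{r}^{s}\phi}\,\norm{\jap{r}^{s}\psi}$ uniformly for $\Re z\in I$, $\Im z\ne0$ by the refined (2); together with the energy-localization reduction and a density argument this gives the uniform bound on $\jap{r}^{-s}(P-z)^{-1}\jap{r}^{-s}$. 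Applying the same identity to $(P-z_1)^{-1}-(P-z_2)^{-1}$ and using the uniform H\"older continuity in $z$ of $\jap{A}^{-s}(P-z)^{-1}\jap{A}^{-s}$ and of the localization error, one finds that $z\mapsto\jap{r}^{-s}(P-z)^{-1}\jap{r}^{-s}$ is uniformly H\"older continuous on each half-strip $\{\Re z\in I,\ 0<\pm\Im z<1\}$; the Cauchy criterion then yields the existence of $\jap{r}^{-s}(P-E\pm i0)^{-1}\jap{r}^{-s}$ for every $E\in I$ and the uniform convergence on $I$.

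The only non-routine step is the weight transference, i.e.\ the boundedness of $\jap{A}^{s}\tilde\chi(P)\jap{r}^{-s}$. Since $A$ carries the unbounded factor $r$, it is \emph{not} $P$-bounded, so one cannot simply compare $A$ with $P$; one must instead exploit that $A$ is, away from $M_c$, the generator of dilations --- whence $\jap{r}^{-1}A$ \emph{is} $P$-bounded --- and handle with care the non-commutativity of $\jap{r}$, $A$ and $\tilde\chi(P)$ and the interpolation in $s$. Everything else is a straightforward application of Lemma~5 and the standard Mourre machinery.
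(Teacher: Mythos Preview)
Your argument is correct and, for parts (1) and (2), identical to the paper's: both invoke Lemma~5 and the standard Mourre/CFKS machinery. For part (3) the paper gives no argument at all beyond citing Perry--Sigal--Simon \cite{PSS} and \cite{ABG}, whereas you spell out an explicit weight-transference proof --- passing from $\jap{A}^{-s}$ to $\jap{r}^{-s}$ via the boundedness of $\jap{A}^{s}\tilde\chi(P)\jap{r}^{-s}$, established at $s=1$ from the $P$-boundedness of $A\jap{r}^{-1}$ together with the commutator bounds for $[\jap{r},\tilde\chi(P)]$ and $[A,\tilde\chi(P)]$, and then interpolated. This is precisely one of the standard routes that the cited references contain (PSS in particular exploits the same comparison between the dilation generator and the position weight after energy localization), so your proof is an explicit rendering of what the paper leaves implicit rather than a genuinely different method. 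The one place your sketch could be tightened is the order of composition in the $s=1$ step: the clean way is to write $A\tilde\chi(P)\jap{r}^{-1}=(A\jap{r}^{-1})\cdot(\jap{r}\tilde\chi(P)\jap{r}^{-1})$ and check that $\jap{r}\tilde\chi(P)\jap{r}^{-1}\colon\mathcal{H}\to\mathcal{D}(P)$ is bounded via the Helffer--Sj\"ostrand formula and the fact that $[P,\jap{r}]$ is first order with bounded coefficients; this makes the ``composing with $\tilde\chi(P)$'' step unambiguous.
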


The claims  (1) and (2) follows by the standard Mourre theory (see, e.g., \cite{Mo}, 
\cite{CFKS} Chapter~4), and the claim (3) follows by its refinement by 
Perry-Sigal-Simon (\cite{PSS}, see also \cite{ABG}). We note that by using methods of 
Amrein-Boutet de Monvel-Georgescu \cite{ABG} or Tamura \cite{Ta}, we can weaken 
the condition on $V$ considerably.


\section{Wave operators and the asymptotic completeness}

In this section, we prove Theorem~2, i.e., the existence and the asymptotic completeness 
of wave operators. Throughout this and the next section, we suppose $P=P_0+V$, where 
$P_0$ satisfies Assumption~A and $V$ satisfies Assumption~B with $\n>1$. 
The scalar potential term is denoted by $V$ collectively.
We may suppose $V_1$, the singular part of the potential $V$, 
is supported in $M_c$ without loss of generality. 

Following the standard procedure, we set
\[
T= PJ-JP_f\ :\ \mathcal{D}(P_f)\cap \mathcal{D}(PJ) \to \mathcal{H}.
\]
$T$ plays the role of the perturbation term in the two-space scattering theory. 
By simple computations, we have 
\[
T\f = -\frac12 G^{-1} (\pa_r,\pa_\th/r)G \begin{pmatrix} a_1-1 & a_2 \\ {}^ta_2 & a_3
\end{pmatrix} \begin{pmatrix}\pa_r \\ \pa_\th/r \end{pmatrix} J\f +VJ\f
\]
for $\f\in\mathcal{D}(P_f)\cap \mathcal{D}(Q)\subset 
\mathcal{D}(P_f)\cap \mathcal{D}(PJ)$. For $\f\in\mathcal{D}(P_f)\cap\mathcal{D}(Q)$, 
we set
\[
|||\f ||| =\norm{\f}_{\mathcal{H}_f}+\norm{P_f\f}_{\mathcal{H}_f}+\norm{Q\f}_{\mathcal{H}_f}.
\]

\begin{lem}
Let $\m\leq \min(\m_1,\m_2+1,\m_3+2,\m_4,\n)>1$. Then there is $C>0$ such that 
\[
\norm{T\jap{r}^\m\f}_{\mathcal{H}} \leq C|||\f|||
\]
for $\f\in \mathcal{D}(P_f)\cap\mathcal{D}(Q)$. 
\end{lem}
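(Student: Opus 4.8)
The plan is to estimate the operator $T$ term by term, using the explicit formula for $T\f$ displayed just above the statement together with the decay bounds on the coefficients from Assumption~A and on $V$ from Assumption~B. First I would observe that $T$ is supported in $\tilde M_\infty$ (since $J\f$ is supported in $\{r\geq 1/2\}$ and the cut-off $j$ makes the coefficient differences act only where $r$ is large), so throughout we may work in the coordinates $(r,\th)$ with $G(x)dx = r^{n-1}H(\th)\,dr\,d\th$, and conjugating by $r^{-(n-1)/2}$ (which is what $J$ does up to the cut-off $j(r)$) turns $\mathcal{H}$-norms on $\tilde M_\infty$ into $\mathcal{H}_f$-norms. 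Under this conjugation $\pa_r$ becomes $\pa_r + \tfrac{n-1}{2r}$, and $\pa_\th/r$ is unchanged up to the density factor, so the first-order operators appearing in $T$ become, modulo lower-order terms with extra powers of $r^{-1}$, the operators $\pa_r$ and $\pa_\th/r$ acting on $\f\in\mathcal{H}_f$.

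Next I would group the four types of terms. The potential term $VJ\f$ is the easiest: by Assumption~B, $|V|\leq C\jap{r}^{-\n}$ on $\tilde M_\infty$, so $\norm{VJ\jap{r}^\m\f}_\mathcal{H}\leq C\norm{\jap{r}^{\m-\n}\f}_{\mathcal{H}_f}\leq C\norm{\f}_{\mathcal{H}_f}$ since $\m\leq\n$. For the principal (derivative) terms, write the operator schematically as $G^{-1}(\pa_r,\pa_\th/r)G\,b\,(\pa_r,\pa_\th/r)^t$ with $b$ one of $a_1-1$, $a_2$, $a_3-h(\th)$ (the $h(\th)$ piece of $a_3$ is handled by recognizing $-\tfrac12 G^{-1}\pa_\th G\,h^{jk}\pa_{\th_k}/r^2$ as $r^{-2}Q$ after the conjugation, which contributes $\norm{\jap{r}^{\m-2}Q\f}_{\mathcal{H}_f}\leq\norm{Q\f}_{\mathcal{H}_f}$ using $\m\leq\m_3+2$, modulo the commutator of $\jap{r}^\m$ with $r^{-2}$ which only improves decay). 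For each remaining $b$, expanding the outer $G^{-1}\pa G$ produces a sum of terms of the form $(\pa^\alpha b)\cdot(\text{second-order or lower differential operator in }\pa_r,\pa_\th/r)$ together with terms where the derivative hits $\jap{r}^\m$ or the Jacobian weight; Assumption~A gives $|\pa_r^\ell\pa_\th^\alpha b|\leq C r^{-\nu_b-\ell}$ with $\nu_b = \m_1,\m_2,\m_3$ respectively, and the key point is that each factor $\pa_\th/r$ carries an $r^{-1}$ and each $r$-derivative landing on the coefficient carries an extra $r^{-1}$. One then checks that every resulting term is bounded by $\jap{r}^{\m-\nu_b-(\text{extra }r^{-1}\text{'s from }a_2)}$ times one of $\norm{\f}_{\mathcal{H}_f}$, $\norm{P_f\f}_{\mathcal{H}_f}=\tfrac12\norm{\pa_r^2\f}_{\mathcal{H}_f}$, $\norm{Q\f}_{\mathcal{H}_f}$, or the mixed quantity $\norm{\pa_r(\pa_\th/r)\f}$ which is controlled by $\norm{P_f\f}+\norm{Q\f}$ via an elliptic/interpolation estimate on $M_f$. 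The choices $\m\leq\m_1$, $\m\leq\m_2+1$, $\m\leq\m_3+2$ are exactly what is needed so that in each case the net power of $r$ is $\leq 0$ on $\tilde M_\infty$, and since $\m>1$ ensures these constraints are mutually compatible with a genuine $\m>1$.

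The main obstacle I anticipate is bookkeeping rather than anything deep: one must be careful about (i) the commutators $[\jap{r}^\m,\cdot]$ with the various factors — moving $\jap{r}^\m$ past $\pa_r$ produces $\m\jap{r}^{\m-1}j(r)(\cdots)$-type terms, but these only \emph{improve} the power count, so they cause no trouble as long as one tracks them; (ii) controlling the genuinely mixed second-order term $\pa_r\pa_\th/r$, for which I would use that on $M_f = \re\times\pa M$ the operator $-\pa_r^2 + r^{-2}\cdot 2Q$ (suitably interpreted away from $r=0$, which is harmless since everything is supported in $r\geq 1/2$) together with ellipticity of $Q$ controls all second derivatives, giving $\norm{\pa_r(\pa_\th/r)\f}_{\mathcal{H}_f}\lesssim\norm{P_f\f}_{\mathcal{H}_f}+\norm{Q\f}_{\mathcal{H}_f}+\norm{\f}_{\mathcal{H}_f} = |||\f|||$ on the relevant region; and (iii) making sure the conjugation by $r^{-(n-1)/2}$ and the presence of $j(r)$, $j(r/R)$ are handled cleanly — all of these cut-offs and their derivatives are bounded and compactly supported or constant for large $r$, so they never destroy decay. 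Assembling the finitely many estimates and taking $C$ to be the sum of the constants gives $\norm{T\jap{r}^\m\f}_\mathcal{H}\leq C|||\f|||$ for $\f\in\mathcal{D}(P_f)\cap\mathcal{D}(Q)$, as claimed.
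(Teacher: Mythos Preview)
Your approach is correct and is exactly what the paper intends; the paper's own proof consists of the single sentence that this is ``an easy computation using the assumptions,'' and you have simply supplied that computation. One minor slip: for the $h(\th)$ part of $a_3$ you invoke $\m\leq\m_3+2$ to justify $\norm{\jap{r}^{\m-2}Q\f}_{\mathcal{H}_f}\leq\norm{Q\f}_{\mathcal{H}_f}$, but that inequality actually requires $\m\leq 2$ (the $h$ piece carries no decay beyond the explicit $r^{-2}$); this extra constraint is also missing from the lemma's stated hypotheses, and since the lemma is only applied with $\m$ slightly above $1$, nothing downstream is affected.
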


The proof is an easy computation using the assumptions. 
Now the existence of the wave operators is straightforward consequence of the 
Cook-Kuroda method: 

\begin{thm}
The wave operators 
\[
W_\pm = W_\pm(P,P_f;J) = \slim_{t\to\pm\infty} e^{itP} J e^{-itP_f}
\]
exist. Moreover, $W_\pm\mathcal{H}^\mp_f =0$, and $W_\pm$ are isometry 
from $\mathcal{H}_f^\pm$ into $\mathcal{H}$. 
\end{thm}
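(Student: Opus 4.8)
The plan is to prove existence of $W_\pm$ by the Cook--Kuroda method, then to identify their kernels and prove isometry by a separate argument using the structure of the reference system.

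\medskip
\noindent\textbf{Existence via Cook--Kuroda.}
First I would observe that it suffices to prove strong convergence of $e^{itP}Je^{-itP_f}\f$ as $t\to\pm\infty$ on a dense subset of $\mathcal{H}_f$; since $\norm{Je^{-itP_f}\f}_{\mathcal{H}_f}\le\norm{\f}_{\mathcal{H}_f}$, the operators in question are uniformly bounded, so density plus an $\e/3$-argument finishes the job. A convenient dense set is
\[
\mathcal{S}_\pm = \bigset{\f\in\mathcal{D}(P_f)\cap\mathcal{D}(Q)}{\widehat{\f}=\mathcal{F}\f \text{ has compact support in }\pm\re_+\times\pa M \text{ away from } \rho=0}.
\]
For $\f\in\mathcal{S}_+$ (say), $\psi(t):=e^{itP}Je^{-itP_f}\f$ is differentiable with
\[
\frac{d}{dt}\psi(t) = i\,e^{itP}\bigpare{PJ-JP_f}e^{-itP_f}\f = i\,e^{itP}\, T\, e^{-itP_f}\f,
\]
so by Cook's criterion it is enough to show $\int_1^\infty \norm{T e^{-itP_f}\f}_{\mathcal{H}}\,dt<\infty$. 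Writing $T e^{-itP_f}\f = \bigpare{T\jap{r}^\m}\jap{r}^{-\m}e^{-itP_f}\f$ and using Lemma~9 (with $\m>1$ as in Assumption~A and Assumption~B, $\n>1$), this reduces to the bound
\[
\Bigabs{\Bigabs{\Bigabs{\jap{r}^{-\m} e^{-itP_f}\f}}} \;=\; \bignorm{\jap{r}^{-\m}e^{-itP_f}\f}_{\mathcal{H}_f} + \bignorm{\jap{r}^{-\m}P_f e^{-itP_f}\f}_{\mathcal{H}_f} + \bignorm{\jap{r}^{-\m}Q e^{-itP_f}\f}_{\mathcal{H}_f} \;\le\; C_N \jap{t}^{-\m}
\]
for large $t$. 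Here $P_f = -\tfrac12\pa_r^2$ acts only in $r$ and commutes with $Q$ and with multiplication operators in $\th$; so $P_f e^{-itP_f}\f$ and $Q e^{-itP_f}\f$ are again of the same form, with $\f$ replaced by $P_f\f\in\mathcal{S}_+$ resp.\ $Q\f$ (note $Q$ preserves the $r$-support condition). Thus everything comes down to the one-dimensional non-stationary phase / propagation estimate: if $g\in C_0^\infty$ with $\supp\widehat g\subset[\delta,\delta^{-1}]$, then $\norm{\jap{r}^{-\m}e^{it\pa_r^2/2}g}_{L^2(\re,dr)} = O(\jap{t}^{-\m})$ as $t\to+\infty$, which is standard: the propagated packet is supported, up to rapidly decaying tails, in $\{r\lesssim -\delta t\}$ as $t\to+\infty$ (wrong sign: in $\{r\ge \delta t\}$, sign depending on convention), so $\jap{r}^{-\m}\lesssim \jap{t}^{-\m}$ there, and the tails are $O(t^{-\infty})$ by repeated integration by parts. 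For $\f\in\mathcal{S}_-$ the convergence fails as $t\to+\infty$ but holds as $t\to-\infty$; and for general $\f$ one decomposes $\f=\f^++\f^-$ according to the sign of $\rho$.

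\medskip
\noindent\textbf{The kernels and isometry.}
I would next show $W_+\mathcal{H}_f^-=0$. For $\f\in\mathcal{S}_-$, the estimate just described gives $\norm{\jap{r}^{-\m}e^{-itP_f}\f}_{\mathcal{H}_f}\to0$ as $t\to+\infty$, but more is true: since $\supp\widehat\f\subset\re_-\times\pa M$, the classical propagation pushes the packet into the region $r\to-\infty$, i.e.\ into $M_c$ together with the part of $M_\infty$ where $j(r)$ is not identically $1$; more precisely $\norm{\chi_{\{r\ge 1\}}e^{-itP_f}\f}_{\mathcal{H}_f}\to0$. Because $J$ is multiplication by $r^{-(n-1)/2}j(r)$, which is supported in $\{r\ge1/2\}$ and bounded there, we get $\norm{Je^{-itP_f}\f}_{\mathcal{H}}\to0$, hence $W_+\f=0$; density of $\mathcal{S}_-$ in $\mathcal{H}_f^-$ gives $W_+\mathcal{H}_f^-=0$, and symmetrically $W_-\mathcal{H}_f^+=0$. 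For the isometry on $\mathcal{H}_f^+$: for $\f\in\mathcal{S}_+$, the \emph{same} propagation estimate shows the packet concentrates in $\{r\to+\infty\}$, so $\norm{\chi_{\{r\le 1\}}e^{-itP_f}\f}_{\mathcal{H}_f}\to0$, and on $\{r\ge1\}$ one has $j(r)=1$ and $J$ is the metric isometry, i.e.\ $\norm{J\chi_{\{r\ge1\}}\psi}_{\mathcal{H}}=\norm{\chi_{\{r\ge1\}}\psi}_{\mathcal{H}_f}$ by the normalization $G(x)dx=r^{n-1}H(\th)dr\,d\th$ and the $r^{-(n-1)/2}$ factor. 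Therefore
\[
\norm{W_+\f}_{\mathcal{H}} = \lim_{t\to+\infty}\norm{Je^{-itP_f}\f}_{\mathcal{H}} = \lim_{t\to+\infty}\norm{\chi_{\{r\ge1\}}e^{-itP_f}\f}_{\mathcal{H}_f} = \norm{\f}_{\mathcal{H}_f},
\]
using unitarity of $e^{-itP_f}$ in the last step. Since $\mathcal{S}_+$ is dense in $\mathcal{H}_f^+$ and $W_+$ is bounded, $W_+$ is an isometry on $\mathcal{H}_f^+$; symmetrically for $W_-$ on $\mathcal{H}_f^-$.

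\medskip
\noindent\textbf{Main obstacle.}
The one genuinely nontrivial input is the free propagation estimate quantifying that a positive-frequency packet under $e^{-itP_f}$ escapes to $r\to+\infty$ with the quantitative $\jap{t}^{-\m}$ rate \emph{in the triple norm} $|||\cdot|||$ — i.e.\ uniformly after applying $P_f$ and $Q$. Getting the rate (rather than mere $o(1)$) is what makes Cook's integral converge, and it is exactly here that the hypothesis $\m>1$ (equivalently $\mu_1,\mu_4,\nu>1$ and $\mu_2>0$, $\mu_3>0$) is consumed. This is a standard stationary-phase computation on $\re$, uniform in the compact $\th$-direction because $P_f$ and $Q$ decouple, so I expect it to be routine but it is the technical heart; the kernel and isometry statements are then soft consequences of the support properties of $J$ and the same propagation picture.
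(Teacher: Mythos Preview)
Your proof is correct and follows essentially the same route as the paper's: Cook--Kuroda combined with Lemma~7 (the bound $\norm{T\jap{r}^\m\f}\le C|||\f|||$, which you label Lemma~9) and the one-dimensional nonstationary-phase propagation estimate for $e^{-itP_f}$, with the kernel and isometry statements read off from the support of $j$ and the fact that $J$ is isometric on $\{r\ge 1\}$. The only differences are cosmetic---the paper inserts a time-dependent cutoff $j(2r/\d t)$ to split the Cook integrand rather than bounding $|||\jap{r}^{-\m}e^{-itP_f}\f|||$ directly---and you should take $\mathcal{F}\f\in C_0^\infty(\re\times\pa M)$ (not merely compactly supported) in your dense set $\mathcal{S}_\pm$ so that the repeated integration by parts for the rapidly decaying tails is actually justified.
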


\begin{proof}
For $\f\in\mathcal{D}(P_f)\cap\mathcal{D}(Q)$, we have 
\[
\frac{d}{dt} e^{itP} J e^{-itP_f}\f =i e^{itP} T e^{-itP_f}\f,
\]
and hence 
\begin{equation}
e^{-itP} J e^{-itP_f} \f = J\f +i \int_0^t e^{isP} T e^{-isP_f}\f ds.
\end{equation}
At first we suppose $\mathcal{F}\f\in C_0^\infty(\re\times \pa M)$ and 
$\supp\!(\mathcal{F}\f)\subset (-\infty,-\d)\times \pa M$ with some $\d>0$. 
Such $\f$'s are dense in $\mathcal{H}_f^-$. 
Then by the non-stationary phase method (see, e.g., \cite{RS} Section~XI.3, 
\cite{Ho} Section~7.7), 
or by the integration by parts, we learn 
\[
\bignorm{\chi_{(-\d t/2,\infty)}(r)e^{-itP_f}\f} \leq C_N\jap{t}^{-N}, \quad t>0,
\]
for any $N$, and in particular, 
\[
\lim_{t\to+\infty} \norm{J e^{-itP_f}\f} =0.
\]
This implies $W_+\mathcal{H}_f^-=0$, and $W_-\mathcal{H}_f^+=0$ is proved 
similarly. 

Now we suppose $\mathcal{F}\f\in C_0^\infty(\re\times\pa M)$ and 
$\supp\!(\mathcal{F}\f)\subset (\d,\infty)\times \pa M$ with some $\d>0$. 
Such $\f$'s are dense in $\mathcal{H}_f^+$ and it suffices to show the claims 
(the existence of wave operators and the isometry) for $\f$. 
By the non-stationary phase method, we then have 
\[
\bignorm{\chi_{(-\infty,\d t/2)}(r)e^{-itP_f}\f} \leq C_N\jap{t}^{-N}, \quad t>0,
\]
for any $N$. This implies, in particular, 
\[
\bignorm{e^{itP} J e^{-itP_f} \f} = \bignorm{e^{itP_f}\f + (j(r)-1)e^{-itP_f}\f} 
= \norm{\f}+O(\jap{t}^{-N})
\]
as $t\to+\infty$, and hence $\norm{W_+\f}=\norm{\f}$ provided $W_+\f$ exists. 
Again by the non-stationary phase method, we learn 
\[
||| (1-j(2r/\d t))e^{-itP_f}\f ||| \leq C_N\jap{t}^{-N}, \quad t>0, 
\]
for any $N$. Thus we have 
\begin{align*}
\bignorm{T e^{-itP_f}\f} &\leq 
\bignorm{T j(2r/\d t) e^{-itP_f}\f} + \bignorm{T (1-j(2r/\d t))e^{-itP_f}\f} \\
&\leq C |||\jap{r}^{-\m} j(2r/\d t) e^{-itP_f}\f ||| + C |||(1-j(2r/\d t)) e^{-itP_f}\f ||| \\
&\leq C' \jap{t}^{-\m} +C_N' \jap{t}^{-N}, \quad \text{for }t>0
\end{align*}
by Lemma~7. This implies $\int_0^\infty \norm{T e^{-itP_f}\f}dt<\infty$, and 
combining this with (3.1), we obtain the existence of $W_+\f$. The existence 
of $W_-\f$ for $\f\in \mathcal{H}_f^-$ is similarly shown, and we omit the detail. 
\end{proof}

There are several methods available to prove the asymptotic completeness 
for one-body Schr\"odinger operators. Here we employ the stationary scattering 
theory, but the time-dependent theory (the Enss method) should work as well. 
At first, we prepare $P$-boundedness of several operators. 
We write
\[
\tilde Q=j(r)Q j(r)\quad\text{on }\mathcal{H}.
\]

\begin{lem}
(1) Let $L$ be a second order differential operator on $M$ with compactly supported 
smooth coefficients. Then $L$ is $P$-bounded. \newline
(2) $j(r)\jap{r}^{-1}\pa_\th$ is $P$-bounded. \newline
(3) $j(r)\pa_r^2$ is $P$-bounded. 
\end{lem}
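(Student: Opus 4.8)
The plan is to prove each of the three $P$-boundedness claims by reducing to the free model $-\frac12(\pa_r^2+\frac1{r^2}Q)$ on the conic end, since all the statements concern operators that are either compactly supported or localized away from $M_c$ by the cutoff $j(r)$. Throughout I would use that $P_0$ is self-adjoint with $\dom(P_0)=H^2(M)$ by Proposition~3, so that $\norm{\f}_{H^2(M)}$ is equivalent to $\norm{\f}_\mathcal{H}+\norm{P_0\f}_\mathcal{H}$, and hence to $\norm{\f}_\mathcal{H}+\norm{P\f}_\mathcal{H}$ because $V$ is $P_0$-bounded with relative bound less than one (Corollary~4). Thus it suffices in each case to bound the relevant operator by $C\norm{\f}_{H^2(M)}$ for $\f\in C_0^\infty(M)$, and then extend by density using that $C_0^\infty(M)$ is a core for $P$ (Lemma~6(a), or Proposition~3).

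For part~(1): a second order differential operator $L$ with compactly supported smooth coefficients is supported in some $K\subset\subset M$. By elliptic regularity for $P_0$ (used already in the proof of Proposition~3), $u\in\dom(P_0)$ implies $u\in H^2_{loc}(M)$, and on the fixed compact set $K$ one has the interior elliptic estimate $\norm{u}_{H^2(K')}\leq C(\norm{P_0 u}_\mathcal{H}+\norm{u}_\mathcal{H})$ for $K\subset\subset K'\subset\subset M$. Since $L$ has bounded coefficients on $K$, $\norm{Lu}_\mathcal{H}\leq C\norm{u}_{H^2(K')}\leq C'\norm{u}_{H^2(M)}$, giving $P$-boundedness. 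For parts~(2) and~(3), the operators $j(r)\jap{r}^{-1}\pa_\th$ and $j(r)\pa_r^2$ are supported in $\tilde M_\infty\cup M_c$ where, by Assumption~A, $P_0$ is a small perturbation of $P_{\mathrm{free}}=-\frac12(\pa_r^2+\frac1{r^2}Q)$ (modulo first-order terms with decaying coefficients); more precisely, $j(r)P_0 j(r)$ differs from $j(r)P_{\mathrm{free}}j(r)$ by operators of the form $r^{-\m_i}(\text{2nd order})$ plus lower order, all of which I expect to control by the same elliptic/interpolation estimates. One then uses the standard fact that on the model cone $\frac1{r^2}\norm{\pa_\th u}^2$ and $\norm{\pa_r^2 u}^2$ are each dominated by $\norm{P_{\mathrm{free}}u}\cdot\norm{u}+\norm{u}^2$ — this is just the observation that $\pa_r^2$ and $\frac1{r^2}Q$ are each nonnegative (after the appropriate conjugation by $r^{(n-1)/2}$) and bounded by their sum, and an interpolation to handle the square root. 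Transferring this back to $P$ via $J$ and its formal adjoint (using $\norm{J\f}_\mathcal{H}=\norm{\f}_{\mathcal{H}_f}$ for $\supp\f\subset(1,\infty)\times\pa M$) completes the bound.

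The main obstacle I anticipate is part~(3), the $P$-boundedness of $j(r)\pa_r^2$. Unlike $\pa_\th$, which comes with the favorable weight $\jap{r}^{-1}$, here one must extract the full second $r$-derivative, and the coefficient $a_1$ of $\pa_r^2$ in $P$ is only asymptotically $1$ (Assumption~A gives $|a_1-1|\leq C r^{-\m_1}$ with $\m_1>1$, so $a_1$ is bounded below by a positive constant for $r$ large, but one must choose the cutoff region carefully, or absorb the error using that $\m_1>0$). The delicate point is that $P$ also contains the cross term $a_2\,\pa_r\pa_\th/r$ and the angular term $a_3\,\pa_\th^2/r^2$, so writing $\pa_r^2 u$ in terms of $Pu$ produces these other second-order contributions, which must themselves be re-estimated; the cross term in particular has to be handled by a Cauchy--Schwarz/Young argument splitting it between the $\pa_r^2$ bound and the $\jap{r}^{-1}\pa_\th$ bound of part~(2), so parts~(2) and~(3) are really proved together as a coupled a priori estimate. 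I would organize this as: first establish the model-cone inequality for $P_{\mathrm{free}}$; then show the error terms from Assumption~A are relatively bounded with small constant (or lower order); then close the estimate for $P_0$; and finally add $V$ using Corollary~4.
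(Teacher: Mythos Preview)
Your treatment of (1) and (2) is essentially the paper's: (1) is the local elliptic estimate, and (2) is the form inequality $|j(r)\jap{r}^{-1}\pa_\th|^2 \le C(\tilde Q/r^2+1)$ together with $\tilde Q/r^2 \le 2P_{\mathrm{free}}+C$, exactly as you outline.

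The gap is in (3). First, the inequality you quote, $\norm{\pa_r^2 u}^2 \le C\norm{P_{\mathrm{free}}u}\,\norm{u}+C\norm{u}^2$, is false even in the radial ($Q=0$) case: it would force $\norm{\pa_r^2 u}\le C\norm{u}$. What you need is $\norm{\pa_r^2 u}^2 \le C\norm{P_{\mathrm{free}}u}^2+C\norm{u}^2$, equivalently the form bound $j\,\pa_r^4\,j \le c_1\, j(-\pa_r^2+\tilde Q/r^2)^2 j + c_2$. Second, and more important, your justification ``$-\pa_r^2$ and $r^{-2}Q$ are nonnegative and bounded by their sum'' does \emph{not} yield this: $A\le A+B$ with $A,B\ge 0$ gives $A^2\le(A+B)^2$ only when $A$ and $B$ commute, and here $[-\pa_r^2,\,r^{-2}Q]\neq 0$. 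No interpolation repairs this, because squaring is not operator monotone. Your coupled bootstrap for the $a_2,a_3$ perturbations is fine \emph{after} the model inequality, but the model inequality is exactly the missing step.

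The paper closes this gap by computing the square directly:
\[
(-\pa_r^2+\tilde Q/r^2)^2=\pa_r^4+\tilde Q^2/r^4-2\,\pa_r(\tilde Q/r^2)\pa_r-[\pa_r,[\pa_r,\tilde Q/r^2]],
\]
observes that the double commutator equals $\bigl(\pa_r^2(j(r)^2/r^2)\bigr)Q$, hence is of order $r^{-4}Q$ for $r\ge 1$ and is dominated by $\tilde Q^2/r^4+C$, and uses $-\pa_r(\tilde Q/r^2)\pa_r\ge 0$. This gives $j(r/2)\pa_r^4 j(r/2)\le c_1\,j(r/2)(-\pa_r^2+\tilde Q/r^2)^2 j(r/2)+c_2$ in form sense, from which $P$-boundedness of $j(r)\pa_r^2$ follows. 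If you want to keep your perturbative framing, you must insert precisely this commutator computation (or an equivalent spectral-decomposition argument in the eigenspaces of $Q$) at the ``model-cone inequality'' step.
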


\begin{proof}
(1) is a consequence of the local elliptic estimate. 
(2) follows from the fact that 
\[
|j(r)\jap{r}^{-1}\pa_\th|^2  \leq C(\tilde Q/r^2+1)
\]
with some $C>0$ in the sense of quadratic forms. 
In order to prove (3), it suffices to show
\[
j(r/2)\pa_r^4 j(r/2) \leq c_1 j(r/2)(-\pa_r^2+\tilde Q/r^2)^2j(r/2) + c_2
\]
in the quadratic form sense with some $c_1, c_2>0$. 
By direct computations, we have 
\begin{align*}
(-\pa_r^2+\tilde Q/r^2)^2 &= \pa_r^4 +\tilde Q^2/r^4 -\pa_r^2(\tilde Q/r^2) - 
(\tilde Q/r^2)\pa_r^2 \\
&= \pa_r^4 +\tilde Q^2/r^4 -2 \pa_r(\tilde Q/r^2)\pa_r -[\pa_r,[\pa_r,(\tilde Q/r^2)]].
\end{align*}
We note 
\[
[\pa_r,[\pa_r,(\tilde Q/r^2)]] =(\pa_r^2(j(r)^2/r^2))Q, 
\]
and $\pa_r^2(j(r)^2/r^2) = c r^{-4}$ for $r\geq 1$. Hence we have 
\[
j(r/2)(\tilde Q^2/r^4-[\pa_r,[\pa_r,(\tilde Q/r^2)]] )j(r/2) \geq -c' r^{-4} j(r/2)^2 \geq -c' j(r/2)^2.
\]
Since $-\pa_r(\tilde Q/r^2)\pa_r \geq 0$, we learn
\[
j(r/2)(\-\pa_r^2+\tilde Q/r^2)^2 j(r/2) \geq j(r/2) (\pa_r^4-c)j(r/2),
\]
and the assertion follows. 
\end{proof}

The next lemma is a key step in the proof of the asymptotic completeness. 
We remark $\tilde Q$ is {\it not}\/ $P$-bounded. 

\begin{lem}
Let $s>1/2$, and let $E\in (0,\infty)\setminus \s_{pp}(P)$. 
Then there exists $C>0$ such that 
\[
\bignorm{\jap{r}^{-s} \tilde Q (P-z)^{-1} (1+\tilde Q)^{-1}\jap{r}^{-s}}\leq C
\]
for $z\in \bigset{z\in\co}{\Im z\neq 0, |z-E|<\e}$. 
\end{lem}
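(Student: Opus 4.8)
\emph{Strategy.} I would reduce the estimate to the limiting absorption principle of Theorem~6~(3) and the $P$-boundedness statements of Lemma~9 by commuting $\tilde Q$ through the resolvent. Since $\s_{pp}(P)$ is discrete in $\re_+$ by Theorem~6~(1), we may fix $\e>0$ so small that $I:=[E-\e,E+\e]$ is a compact subset of $\re_+\setminus\s_{pp}(P)$; write $R(z)=(P-z)^{-1}$ and $B=(1+\tilde Q)^{-1}$. Because $\tilde Q=j(r)Qj(r)$ differentiates only in $\th$, the operators $\tilde Q$, $B$ and multiplication by $\jap{r}^{-s}$ commute pairwise, and $\norm{B}\le1$, $\norm{\tilde Q B}\le1$. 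Also, since the bound for a large $s$ follows from the one for any smaller value, we may assume $\tfrac12<s<\m_*/2$, where $\m_*:=\min(\m_1,\m_2+1,\m_3+2,\m_4,\n)>1$ is the exponent already appearing in Lemma~7. On a suitable core one has the commutator identity $\tilde Q R(z)=R(z)\tilde Q+R(z)[P,\tilde Q]R(z)$, whence
\[
\jap{r}^{-s}\,\tilde Q R(z)B\,\jap{r}^{-s}
=\bigl(\jap{r}^{-s}R(z)\jap{r}^{-s}\bigr)(\tilde Q B)
+\jap{r}^{-s}R(z)\,[P,\tilde Q]\,R(z)B\,\jap{r}^{-s} .
\]
A little care is needed here, since $\tilde Q$ is not $P$-bounded: one runs the computation first with the bounded approximants $\tilde Q_\d=\tilde Q(1+\d\tilde Q)^{-1}$ and lets $\d\dto0$. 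The first term is bounded on $\bigset{z}{\Re z\in I,\ \Im z\neq0}$ by $\norm{\tilde Q B}\cdot\sup\norm{\jap{r}^{-s}R(z)\jap{r}^{-s}}$, which is finite by Theorem~6~(3); it remains to control the second term.

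Next I would compute $[P,\tilde Q]$. It is a differential operator of order at most three, supported where $r\ge1/2$ because of the cut-off $j$. On $\tilde M_\infty$ one has $\tilde Q=Q$, and, since the densities $G$ and $H$ differ only by the $\th$-independent factor $r^{n-1}$, a direct calculation shows that $[P,Q]$ \emph{vanishes identically} in the model case $a_1\equiv1$, $a_2\equiv0$, $a_3\equiv h$, $V\equiv0$; consequently, by Assumptions~A and~B, every coefficient of $[P,\tilde Q]$ on $\tilde M_\infty$ is $O(\jap{r}^{-\m_*})$, and on the relatively compact region $\{1/2\le r\le1\}$ the remaining contribution is a compactly supported operator of order at most three. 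The decisive feature is that in the principal part of $[P,\tilde Q]$ every term carries a \emph{single} factor $\pa_\th$; since $(1+\tilde Q)^{-1}$ removes two angular derivatives and $j(r)\pa_r^2$, $j(r)\jap{r}^{-1}\pa_\th$ are $P$-bounded (Lemma~9), one obtains — after one further use of the resolvent identity to bring a factor $(1+\tilde Q)^{-1}$ up against $[P,\tilde Q]$, and after rearranging the radial derivatives — that the second term above is a finite sum of expressions of the form
\[
\jap{r}^{-s}R(z)\,\jap{r}^{-\m_*/2}\,L\;C'\;N\,\jap{r}^{-\m_*/2}\,R(z)\jap{r}^{-s}
\]
(plus compactly supported analogues handled by Lemma~9~(1), and lower-order terms with strictly faster decay), where $L$ and $N$ are $P$-bounded operators drawn from Lemma~9, composed with a bounded operator on $\mathcal{H}_b$, and $C'$ is bounded and commutes with $\jap{r}$.

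To assemble the estimate, one uses that Theorem~6~(3), combined with $PR(z)=1+zR(z)$ and the fact that $[P,\jap{r}^{-s}]$ is of lower order with an extra power of decay, shows that $\jap{r}^{-a}R(z)\jap{r}^{-b}$ is bounded from $\mathcal{H}$ into $\dom(P)$ with its graph norm, uniformly on $\bigset{z}{\Re z\in I,\ \Im z\neq0}$, whenever $a,b>\tfrac12$. Composing such a map on the left with the $P$-bounded operator $L$, and (after taking adjoints) on the right with $N$, and using that $C'$ and the outer weights $\jap{r}^{-s}$ are bounded, each displayed expression is uniformly bounded; the residual weights all have exponent $>\tfrac12$ precisely because $\m_*>1$ and $s<\m_*/2$. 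Summing the finitely many terms finishes the proof.

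\emph{Main obstacle.} The delicate part is the middle step: organising the up to three angular and up to two radial derivatives in $[P,\tilde Q]$ so that they are genuinely absorbed by the two $(1+\tilde Q)^{-1}$ factors and the $P$-bounded building blocks of Lemma~9, while using the cancellation in the model case to keep a net decay $\jap{r}^{-\m_*}$ with $\m_*>1$ available for splitting symmetrically across the two resolvents. A secondary, purely technical point is the rigorous justification of the operator manipulations in the presence of the unbounded, non-$P$-bounded operator $\tilde Q$, which is handled via the approximants $\tilde Q_\d$.
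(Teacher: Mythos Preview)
Your overall strategy---commute $\tilde Q$ through the resolvent and estimate $\jap{r}^{-s}R(z)[P,\tilde Q]R(z)B\jap{r}^{-s}$ directly---differs from the paper's proof and, as written, has a real gap at the step you yourself flag as the ``main obstacle''.

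First, the sentence ``in the principal part of $[P,\tilde Q]$ every term carries a \emph{single} factor $\pa_\th$'' is not correct. As the paper records, $[b_3\pa_\th^2,\tilde Q]$ is of the form $\tilde b_3\,\pa_\th^3$ with $\tilde b_3=O(\jap r^{-\m_3-2})$, and $[b_2\pa_r\pa_\th,\tilde Q]$ is of the form $\tilde b_2\,\pa_r\pa_\th^2$; so up to three angular derivatives occur. Your derivative bookkeeping therefore relies entirely on getting the single factor $B=(1+\tilde Q)^{-1}$ \emph{adjacent} to $[P,\tilde Q]$.

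This is where the argument breaks. Your ``one further use of the resolvent identity'' means commuting $B$ through $R(z)$, which gives
\[
R(z)B \;=\; BR(z)\;-\;R(z)\,B[P,\tilde Q]B\,R(z).
\]
The correction term now has \emph{three} resolvents, a \emph{new} copy of $[P,\tilde Q]$, and a middle $R(z)$ with no weight; iterating does not produce a finite sum of the displayed shape. If instead you insert $1=B(1+\tilde Q)$ to place $B$ next to $[P,\tilde Q]$, you get
\[
\jap r^{-s}R(z)[P,\tilde Q]R(z)B\jap r^{-s}
=\Bigl(\jap r^{-s}R(z)[P,\tilde Q]B\,\jap r^{\,s}\Bigr)\cdot
\Bigl(\jap r^{-s}(1+\tilde Q)R(z)(1+\tilde Q)^{-1}\jap r^{-s}\Bigr),
\]
and the right-hand factor is precisely $A(z)$, the very operator you are trying to bound. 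Thus you are forced into a closed inequality $A(z)=(\text{uniformly bounded})+K\cdot A(z)$, but nothing in your argument makes $\norm{K}<1$.

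The paper resolves exactly this circularity by introducing the spatial cutoff $j_R=j(r/R)$: it treats $\chi_R A(z)$ and the pieces II, $\text{III}_4$ directly, while writing $\text{III}_1,\text{III}_2,\text{III}_3$ as $B_R^{(i)}A(z)$ with $\norm{B_R^{(i)}}=O(R^{-\d})$ for suitable $s\in(1/2,\,\n_1/2)$. Choosing $R$ large makes $B_R=\sum B_R^{(i)}$ a strict contraction, so that $A(z)=(1-B_R)^{-1}(\chi_R A(z)+\text{II}+\text{III}_4)$ is uniformly bounded. This smallness mechanism---the large-$R$ cutoff---is the key idea, and it is absent from your sketch. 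Without it, your decomposition does not close.
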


\begin{proof}
We show 
\[
A(z) = \jap{r}^{-s} (1+\tilde Q)(P-z)^{-1} (1+\tilde Q)^{-1} \jap{r}^{-s}
\]
is bounded uniformly in a small neighborhood of $E$. 
For $R>0$, we set
\[
j_R(x)=j(r/R)\in C^\infty(M); \quad \chi_R(x)=1-j_R(x)\in C_0^\infty(M).
\]
Note, for $R>0$, we have 
\begin{align*}
\chi_R A(z) &= \jap{r}^{-s} \chi_R (1+\tilde Q)(P-i)^{-1}\jap{r}^{-s}(1+\tilde Q)^{-1} \\
&\quad + (z-i) \bigpare{\jap{r}^{-s}\chi_R (1+\tilde Q)(P-i)^{-1}\jap{r}^s} \times \\
&\quad\quad \times \bigpare{\jap{r}^{-s}(P-z)^{-1}\jap{r}^{-s}}\times (1+\tilde Q)^{-1},
\end{align*}
and it is easy to see from Lemma~9-(1) and Theorem~6 (the limiting absorption 
principle) that each term is bounded uniformly for $z$ close to $E$. 

We then consider 
\begin{align*}
j_R A(z) &= j_R \jap{r}^{-s} (1+\tilde Q)(P-z)^{-1} (1+\tilde Q)^{-1} \jap{r}^{-s}\\
&= \jap{r}^{-s} (P-z)^{-1}j_R \jap{r}^{-s} \\
&\quad + \jap{r}^{-s} (P-z)^{-1}[P, j_R(1+\tilde Q)](P-z)^{-1}(1+\tilde Q)^{-1}\jap{r}^{-s} \\
&=  \jap{r}^{-s} (P-z)^{-1}j_R \jap{r}^{-s} \\
&\quad + \jap{r}^{-s} (P-z)^{-1} [P,j_R](1+\tilde Q)(P-z)^{-1} (1+\tilde Q)^{-1}\jap{r}^{-s} \\
&\quad + \jap{r}^{-s}(P-z)^{-1}j_R [P,\tilde Q] (P-z)^{-1}(1+\tilde Q)^{-1}\jap{r}^{-s} \\
&=\mbox{I} + \mbox{II} + \mbox{III}.
\end{align*}
I is uniformly bounded by the limiting absorption principle. 
Since $[P,j_R]$ is a first order differential operator with compactly 
supported coefficients, we can easily see 
\[
\mbox{II} =\bigpare{\jap{r}^{-s}(P-z)^{-1}[P,j_R]}
\bigpare{\chi_{2R}(1+\tilde Q) (P-z)^{-1}\jap{r}^{-s}} (1+\tilde Q)^{-1}
\]
is uniformly bounded in $z\sim E$ for fixed $R>0$, as well as $\chi_R A(z)$. 
We now consider the term III. For simplicity, we write (with slight abuse of notation), 
\[
P= b_1 \pa_r^2 + b_2 \pa_r\pa_\th +b_3 \pa_\th^2 + B^{(4)}
\]
where $B^{(4)}$ is a first order differential operator with coefficients of order 
$O(\jap{r}^{-\n_4})$ where $\n_4>1$ for large $r$. 
We decompose III according to this decomposition, and estimate each term as follows: 
\begin{align*}
\mbox{III}_1 &= \jap{r}^{-s} (P-z)^{-1}j_R [b_1\pa_r^2,\tilde Q] (P-z)^{-1} 
(1+\tilde Q)^{-1}\jap{r}^{-s} \\
&= \bigbrac{\jap{r}^{-s} (P-z)^{-1}\jap{r}^{-s} (1-j\pa_r^2 j)} \times \\
&\qquad \times \bigbrac{(1-j\pa_r^2 j)^{-1}\jap{r}^s j_R [b_1\pa_r^2,\tilde Q]
(1+\tilde Q)^{-1}\jap{r}^s } \times \\
&\qquad \times \bigbrac{\jap{r}^{-s}(1+\tilde Q)(P-z)^{-1}(1+\tilde Q)^{-1} \jap{r}^{-s}}\\
&= B_R^{(1)} A(z).
\end{align*}
We recall $j\pa_r^2 j$ is $P$-bounded, and hence the first component is bounded uniformly. 
Since $[b_1\pa_r^2,\tilde Q]$ is a differential operator of the form: $\tilde b_1 \pa_r^2\pa_\th$
with $\tilde b_1=O(\jap{r}^{-\n_1})$ as $r\to+\infty$ where $\n_1>1$, we learn 
\[
\norm{B_R^{(1)}} = O(R^{-\n_1+2s}) \quad \text{as }R\to\infty.
\]
We may assume $s\in (1/2, \n_1/2)$, and then $\norm{B_R^{(1)}} = O(R^{-\d})$ 
as $R\to+\infty$ with some $\d>0$. Similarly, we have 
\begin{align*}
\mbox{III}_2 &= \jap{r}^{-s} (P-z)^{-1}j_R [b_2\pa_r\pa_\th,\tilde Q] (P-z)^{-1} 
(1+\tilde Q)^{-1}\jap{r}^{-s} \\
&= \bigbrac{\jap{r}^{-s} (P-z)^{-1}\jap{r}^{-s} (1-j\pa_r^2 j)} \times \\
&\qquad \times \bigbrac{(1-j\pa_r^2 j)^{-1}\jap{r}^s j_R [b_2\pa_r\pa_\th,\tilde Q]
(1+\tilde Q)^{-1}\jap{r}^s } \times \\
&\qquad \times \bigbrac{\jap{r}^{-s}(1+\tilde Q)(P-z)^{-1}(1+\tilde Q)^{-1} \jap{r}^{-s}}\\
&= B_R^{(2)} A(z).
\end{align*}
We note $[b_2\pa_r\pa_\th,\tilde Q]$ is a differential operator of the form: 
$\tilde b_2\pa_r\pa_\th^2$ with $\tilde b_2=O(\jap{r}^{-\n_2-1})$ as $r\to+\infty$, 
where $\n_2>0$. 
Hence, analogously to $\mbox{III}_1$, we learn  $\norm{B_R^{(2)}} = O(R^{-\d})$ 
as $R\to+\infty$. We next consider
\begin{align*}
\mbox{III}_3 & = \jap{r}^{-s} (P-z)^{-1}j_R [b_3\pa_\th^2,\tilde Q] (P-z)^{-1} 
(1+\tilde Q)^{-1}\jap{r}^{-s} \\
&= \bigbrac{\jap{r}^{-s} (P-z)^{-1}\jap{r}^{-s-1} (1+\tilde Q)^{1/2}} \times \\
&\qquad \times \bigbrac{(1-\tilde Q)^{-1/2} \jap{r}^{s+1} j_R [b_3\pa_\th^2,\tilde Q]
(1+\tilde Q)^{-1}\jap{r}^s } \times \\
&\qquad \times \bigbrac{\jap{r}^{-s}(1+\tilde Q)(P-z)^{-1}(1+\tilde Q)^{-1} \jap{r}^{-s}}\\
&= B_R^{(3)} A(z).
\end{align*}
The first component is bounded by Lemma~9-(2) and the limiting absorption 
principle. $[b_3\pa_\th^2,\tilde Q]$ is a differential operator of the form: 
$\tilde b_3\pa_\th^3$ with $\tilde b_3 =O(\jap{r}^{-\n_3-2})$ as $r\to+\infty$, 
and hence we learn $B_R^{(3)}=O(R^{-\d})$ similarly. At last, we can also show
\[
\mbox{III}_4 =\jap{r}^{-s}(P-z)^{-1} j_R B^{(4)}(P-z)^{-1}(1+\tilde Q)^{-1}\jap{r}^{-s}
\]
is uniformly bounded in $z\sim E$ for fixed $R$, similarly to $\chi_R A(z)$. 

Combining these estimates, we have
\[
A(z) =\chi_R A(z) +\mbox{II} + \mbox{III}_4 + B_R A(z), 
\]
where $B_R=B_R^{(1)} + B_R^{(2)} +B_R^{(3)}=O(R^{-\d})$, and the other terms 
are bounded uniformly in $z\sim E$ for fixed $R$. Thus, choosing $R$ sufficiently 
large, we conclude 
\[
A(z) = (1-B_R)^{-1} (\chi_R A(z) + \mbox{II}+\mbox{III}_4)
\]
is bounded uniformly for $z\sim E$. 
\end{proof}

It also follows from Lemma~9-(3) that 
\[
\bignorm{\jap{r}^{-s} j \pa_r^2 (P-z)^{-1}\jap{r}^{-s}}\leq C
\]
uniformly for $z\sim E$. Recalling Lemma~7, and using these estimates, we now have
\begin{equation}
\bignorm{\jap{r}^{s} T^* (P-z)^{-1} (1+\tilde Q)^{-1} \jap{r}^{-s}} \leq C
\quad\text{for }z\sim E
\end{equation}
if $s>1/2$ is chosen as in the proof of Lemma~10. 
The estimate (3.2) and the limiting absorption principle are sufficient to 
show the asymptotic completeness with the help of the abstract stationary 
scattering theory (see, e.g., Kato-Kuroda \cite{KK}, Yafaev \cite{Yaf1,Yaf}) in a slightly 
generalized form as described in Appendix~B. 

We set 
\begin{align*}
&\mathcal{H}_1 =\mathcal{H}_f, \quad \mathcal{H}_2 =\mathcal{H}, \quad 
\mathcal{X}_1 =\jap{r}^{-s}\mathcal{H}_f, \quad \mathcal{X}_2 =\jap{r}^{-s}\mathcal{H},\\
&\mathcal{Y}_1 = \jap{r}^{-s}(1+Q)^{-1}\mathcal{H}_f, \quad 
\mathcal{Y}_2 =\jap{r}^{-s}(1+\tilde Q)^{-1}\mathcal{H}, 
\end{align*}
where $s>1/2$ is chosen as above. We then set 
\[
H_1=P_f, \quad H_2 =P, \quad J_{12}=J^*, \quad J_{21}=J.
\]
We let $P_1^\pm$ be the orthogonal projections from $\mathcal{H}_f$ 
to $\mathcal{H}_f^\pm$, respectively, and let $P_2^\pm =1$. 
Then Assumption~C in Appendix~B is satisfied by the limiting absorption 
principle (Theorem~6 (3)) and the estimate (3.2). We recall that the limiting absorption 
principle for $P_f$ on $\mathcal{X}_1$ is well-known. 
The existence of the wave operators 
\[
W_\pm =W_{12}^\pm(\re) = \slim_{t\to\pm\infty} e^{itP} J e^{-itP_f}
\]
is proved (Theorem~8), and now we can apply Theorem~15 to conclude the asymptotic 
completeness: 

\begin{thm} 
$\Ran W_\pm=\mathcal{H}_{ac}(P) $.
\end{thm}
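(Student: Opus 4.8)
The plan is to deduce Theorem~12 from the abstract two-Hilbert-space stationary scattering theory of Kato--Kuroda and Yafaev, in the slightly generalized form recorded in Appendix~B (Theorem~15). Since $PW_\pm = W_\pm P_f$ and $P_f$ is purely absolutely continuous, the inclusion $\Ran W_\pm \subseteq \mathcal{H}_{ac}(P)$ is automatic, so the entire content is the reverse inclusion, i.e.\ asymptotic completeness.

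First I would record the two limiting absorption principles that drive the abstract machine: for $P$ on $\mathcal{X}_2 = \jap{r}^{-s}\mathcal{H}$ this is Theorem~6~(3), and for $P_f = -\tfrac12\pa_r^2$ on $\mathcal{X}_1 = \jap{r}^{-s}\mathcal{H}_f$ it is the classical statement for the free Laplacian on the line tensored with the compact fibre $\pa M$. Next I would verify the remaining hypotheses of Assumption~C for the data $H_1 = P_f$, $H_2 = P$, $J_{21}=J$, $J_{12}=J^*$, the projections $P_1^\pm$ onto $\mathcal{H}_f^\pm$, $P_2^\pm = 1$, and the weighted spaces $\mathcal{X}_i$, $\mathcal{Y}_i$ fixed in the text. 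The nontrivial point is the boundedness, uniformly for $z$ in a punctured neighbourhood of a fixed $E\in(0,\infty)\setminus\s_{pp}(P)$, of $T(P_f - z)^{-1}$ and $T^*(P-z)^{-1}$ as operators between the appropriate weighted spaces. For the first operator this is immediate from Lemma~7 together with the limiting absorption principle for $P_f$ (note that $\mathcal{D}(P_f)\cap\mathcal{D}(Q)$ is invariant under $(P_f-z)^{-1}$, so $|||\cdot|||$ stays under control). For the second operator this is precisely the estimate~(3.2), whose proof rests on Lemmas~9 and~10; the role of the extra factor $(1+\tilde Q)^{-1}$ in the definition of $\mathcal{Y}_2$ is exactly to absorb the fact, emphasised before Lemma~10, that $\tilde Q$ is \emph{not} $P$-bounded.

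With Assumption~C verified and the existence of $W_\pm$ already established in Theorem~8, Theorem~15 applies and yields that the dual wave operators $\widetilde W_\pm = \slim_{t\to\pm\infty} e^{itP_f}J^* e^{-itP}E_{ac}(P)$ also exist, that $W_\pm$ and $\widetilde W_\pm$ are mutually adjoint, and hence that $W_\pm\widetilde W_\pm = E_{ac}(P)$. Combined with $\Ran W_\pm\subseteq\mathcal{H}_{ac}(P)$ this gives $\Ran W_\pm = \mathcal{H}_{ac}(P)$, as claimed.

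I expect the main obstacle to be bookkeeping rather than conceptual: one must check that all the compositions of resolvents with $T$, $T^*$, the cut-offs $j(r)$, $j_R$, $\chi_R$, and the angular operator $\tilde Q$ genuinely land in the weighted spaces claimed, with norms uniform as $\Im z\to 0$ — which is exactly where the non-$P$-boundedness of $\tilde Q$ forces the careful decomposition of Lemma~10 (the terms $\mathrm{III}_1$--$\mathrm{III}_4$) and the use of Lemma~9~(3) to control $j\pa_r^2(P-z)^{-1}$. A secondary point is to confirm that the abstract theorem's hypotheses on the identification operators — in effect that $J^*J - 1$ and $JJ^* - 1$ are harmless on the relevant spectral subspaces, which here reduces to the behaviour of $1-j(r)$ and the relatively compact region $M_c$ — hold in the precise form stated in Appendix~B.
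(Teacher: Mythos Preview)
Your proposal is correct and follows essentially the same route as the paper: verify Assumption~C of Appendix~B for the data $H_1=P_f$, $H_2=P$, $J_{21}=J$, $P_1^\pm$ the projections onto $\mathcal{H}_f^\pm$, $P_2^\pm=1$, with the weighted spaces $\mathcal{X}_i,\mathcal{Y}_i$ as in the text, using Theorem~6(3) and estimate~(3.2) (built on Lemmas~9 and~10), and then invoke Theorem~15 together with the existence result Theorem~8. One cosmetic point: Theorem~15 applied with $(j,k)=(2,1)$ and $P_2^\pm=1$ already asserts directly that $W_\pm$ is unitary from $P_1^\pm E_1(I')\mathcal{H}_f$ onto $E_2(I')\mathcal{H}$, so you need not argue via the time-dependent dual operators $\widetilde W_\pm$ (whose existence as strong limits is not separately established here; what Theorem~14 gives you is the \emph{stationary} adjoint $\Omega_{12}^\pm=(\Omega_{21}^\pm)^*$).
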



\section{The scattering operator and the scattering matrix}

By virtue of the asymptotic completeness, the scattering operator:
$S= W_+^* W_-$ : $\mathcal{H}_f^- \to \mathcal{H}_f^+$ 
is a unitary operator. 
For $\f\in \mathcal{H}_f^\pm$, we set 
\[
(F_{0,\pm}(\l)\f)(\th) =(2\l)^{-1/4} (\mathcal{F}\f)(\pm\sqrt{2\l},\th), \quad\l>0, \th\in\pa M.
\]
If $\f\in \jap{r}^{-s}\mathcal{H}_f^\pm$ with some $s>1/2$, then 
$F_{0,\pm}(\l)\f \in \mathcal{H}_b=L^2(\pa M,Hd\th)$, and $F_{0,\pm}(\cdot)$ is 
extended to a unitary operator from $\mathcal{H}_f^\pm$ to $L^2(\re_+;\mathcal{H}_b)$. 
In fact, by the Plancherel theorem, we have 
\begin{align*}
\int_0^\infty \bignorm{ F_{0,\pm}(\l)\f}_{\mathcal{H}_b}^2 d\l 
&= \int_0^\infty \int_{\pa M} \bigabs{(\mathcal{F}\f)(\pm\sqrt{2\l},\th)}^2 H(\th) d\th \frac{d\l}{\sqrt{2\l}} \\
&= \int_0^\infty \int_{\pa M} \bigabs{(\mathcal{F}\f)(\pm\x,\th)}^2 H(\th)d\th d\x
=\norm{\f}^2_{\mathcal{H}_f^\pm}
\end{align*}
for $\f\in \mathcal{H}_f^\pm$. It is also easy to see 
\[
F_{0,\pm}(\l) P_f \f = \l F_{0,\pm}(\l)\f \quad\text{for }\f\in \mathcal{D}(P_f),
\]
and hence $F_{0,\pm}(\cdot)$ is a spectral representation of $P_f$ on $\mathcal{H}_f^\pm$. 
We note $S$ satisfies the intertwining property, i.e., for any $g\in L^\infty(\re)$, 
\[
g(P_f)\, S = S \,g(P_f) \ : \ \mathcal{H}_f^- \to \mathcal{H}_f^+.
\]
Then by the abstract theory of scattering, we learn there exists 
$S(\l)\in \mathcal{L}(\mathcal{H}_b)$ for a.e.\ $\l>0$ such that 
\[
F_{0,+}(\l) S\f = S(\l) F_{0,-}(\l)\f
\quad\text{for }\f\in \jap{r}^{-s}\mathcal{H}_f^-.
\]
$S(\l)$ is the scattering matrix for our scattering system $(P,P_f;J)$. 
Moreover, by the abstract scattering theory, we have the following representation 
formula for $S(\l)$ for $\l\in\re_+\setminus \s_{pp}(P)$. (Appendix~B. 
See also \cite{Yaf} Section~7.3.)
\begin{equation}
S(\l) =-2\pi i F_{0,+}(\l)\bigpare{J^* T -T^* (P-\l-i0)^{-1}T} F_{0,-}(\l)^*.
\end{equation}

On the other hand, we can represent the wave operators as follows (Appendix B):  
Let $I\subset\subset \re_+\setminus \s_{pp}(P)$, then 
\begin{equation}
W_\pm E_{P_f}(I) \f =\int_I \bigpare{J-(P-\l\pm i0)^{-1} T}F_{0,\pm}(\l)^*F_{0,\pm}(\l)\f d\l.
\end{equation}
In particular, for $\f\in \mathcal{D}(Q)$,
\[
\G = \bigpare{J-(P-\l-i0)^{-1} T}F_{0,-}(\l)^*\f \in \jap{r}^s\mathcal{H}, \quad  s>1/2, 
\]
is a generalized eigenfunction of $P$ with the eigenvalue $\l$. We now consider 
the asymptotic behavior of $\G$ as $r\to+\infty$. 
By the resolvent equation, we have 
\begin{align*}
J^* \G &= \bigpare{J^* J - J^*(P-\l-i0)^{-1} T} F_{0,-}(\l)^* \f \\
&= J^*J F_{0,-}(\l)^* \f \\
&\quad - (P_f-\l-i0)^{-1} \bigpare{J^*T- T^*(P-\l-i0)^{-1} T}F_{0,-}(\l)^*\f.
\end{align*}
The first term is easy to handle: 
\[
J^*J F_{0,-}(\l)^*\f = \frac{1}{\sqrt{2\pi k}} e^{-ik r}\f(\th) 
\quad \text{for } r\geq 1, 
\]
where $k=\sqrt{2\l}$. 

\begin{lem}
$(-\frac12 \pa_r^2-z)^{-1}$, $z\notin \re_+$, has the following integral kernel: 
\[
(-\tfrac{1}{2}\pa_r^2 -z)^{-1}\f(r) = \int_{-\infty}^\infty (i/\z) e^{i|r-r'|\z} \f(r')dr',
\]
where $z=\z^2/2$ with $\Im \z>0$. 
\end{lem}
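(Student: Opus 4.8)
The plan is to compute the resolvent kernel of the one-dimensional operator $-\tfrac12\pa_r^2$ directly, either by Fourier transform or by solving the relevant ODE. Since the statement is purely about the free operator on $L^2(\re)$, I would proceed as follows. First I would recall that $-\tfrac12\pa_r^2$ is diagonalized by $\mathcal{F}$, with $(-\tfrac12\pa_r^2 -z)^{-1}$ corresponding to multiplication by $(\tfrac12\rho^2 - z)^{-1}$; hence for $\f\in C_0^\infty(\re)$,
\[
(-\tfrac12\pa_r^2 -z)^{-1}\f(r) = (2\pi)^{-1}\int_{-\infty}^\infty e^{ir\rho}\,\frac{\hat\f(\rho)}{\tfrac12\rho^2 - z}\,d\rho.
\]
Writing $z=\z^2/2$ with $\Im\z>0$ (so that $z\notin\re_+$ corresponds precisely to $\z$ in the open upper half-plane, away from the real axis), the denominator factors as $\tfrac12(\rho-\z)(\rho+\z)$, with poles at $\rho=\pm\z$ strictly off the real line. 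Substituting the definition of $\hat\f$ and exchanging the order of integration (justified by the rapid decay of $\f$), one is reduced to evaluating $(2\pi)^{-1}\int e^{i(r-r')\rho}(\tfrac12\rho^2-z)^{-1}d\rho$ for each fixed $r,r'$.

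The key step is the contour integration: for $r-r'>0$ one closes the $\rho$-contour in the upper half-plane, picking up the residue at $\rho=\z$; for $r-r'<0$ one closes in the lower half-plane, picking up the residue at $\rho=-\z$. In either case the residue calculation gives a factor $e^{i|r-r'|\z}\cdot(i/\z)$, after accounting for the $2\pi i$ from the residue theorem, the $1/(2\pi)$ normalization, and the factor $1/2$ from the quadratic term. This yields exactly the claimed kernel
\[
(-\tfrac12\pa_r^2-z)^{-1}\f(r) = \int_{-\infty}^\infty (i/\z)\,e^{i|r-r'|\z}\,\f(r')\,dr'.
\]
I would then note that the kernel $(i/\z)e^{i|r-r'|\z}$ defines a bounded operator on $L^2(\re)$ (e.g. by Schur's test, since $\int (i/\z)e^{i|r-r'|\z}dr' = (i/\z)\cdot 2i/(-\Im\z)$ is uniformly bounded using $\Im\z>0$), so the identity, established first for $\f\in C_0^\infty$, extends to all of $L^2(\re)$ by density; and that the operator so defined indeed inverts $-\tfrac12\pa_r^2 - z$, either by running the Fourier computation backwards or by checking that $r\mapsto (i/\z)e^{i|r-r'|\z}$ solves $(-\tfrac12\pa_r^2-z)u=\d(\cdot-r')$ away from $r=r'$ with the correct jump in the derivative.

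I do not expect any genuine obstacle here; this is a classical computation. The only points requiring a little care are: the correct identification of the branch, i.e. that "$z\notin\re_+$" is equivalent to being able to write $z=\z^2/2$ with $\Im\z>0$ (this uses that the square-root map sends $\co\setminus\re_+$ bijectively onto the open upper half-plane); the sign bookkeeping in the residue computation so that the exponent comes out as $+i|r-r'|\z$ with the decaying sign (which is automatic once one closes the contour in the half-plane dictated by the sign of $r-r'$); and the Fubini justification, which is harmless for $\f\in C_0^\infty$. Alternatively, one could bypass contour integration entirely and simply verify the Green's-function property of the proposed kernel by a direct ODE computation, which may be the cleanest way to present it.
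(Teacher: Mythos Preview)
Your proposal is correct and matches the paper's own treatment: the paper simply remarks that the proof is ``an elementary computation of the Fourier analysis'' and gives no further detail, and your Fourier-transform-plus-residue argument is exactly the standard way to carry out that computation. The alternative Green's-function verification you mention at the end would work equally well and is arguably tidier, but either route is what the authors have in mind.
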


The proof of this lemma is an elementary computation of the Fourier analysis. 

We denote 
$\F = \bigpare{J^* T -T^*(P-\l-i0)^{-1} T} F_{0,-}(\l)^*\f$. Then we have 
\begin{align*}
& J^*\G-J^*JF_{0,-}(\l)^*\f = -\frac{i}{k}\int_{-\infty}^\infty e^{i|r-r'|k} \F(r',\cdot)dr' \\
&\quad =-\frac{i}{k}\biggpare{\int_{-\infty}^r e^{i(r-r')k} \F(r',\cdot)dr' 
+\int_r^\infty e^{-i(r-r')k} \F(r',\cdot)dr'}\\
&\quad=  \G_1 +\G_2.
\end{align*}
Since $\F\in \jap{r}^{-s} \mathcal{H}_f$ with $s>1/2$, we learn 
$\norm{\G_2(r,\cdot)}_{\mathcal{H}_b} =O(\jap{r}^{-\d})$ as $r\to+\infty$ 
with some $\d>0$. Similarly, if we set
\[
\G_3=-\frac{i}{k} e^{irk} \int_{-\infty}^\infty e^{-ir'k}\F(r',\cdot) dr',
\]
then $\norm{\G_1(r,\cdot)-\G_3(r,\cdot)}_{\mathcal{H}_b} = O(\jap{r}^{-\d})$ 
as $r\to+\infty$.  Moreover, if we compare $\G$ with $S(\l)\f$ using (4.1), 
we learn
\[
\G_3 =e^{irk} \frac{i}{k} \sqrt{2\pi k} F_{0,+}(\l)\F =  \frac{1}{\sqrt{2\pi k}} e^{irk} (S(\l)\f).
\]
Combining these, we have
\[
\biggnorm{ J^*\G(r,\cdot) -\frac{1}{\sqrt{2\pi k}} \bigpare{e^{-ikr}\f + 
e^{ikr} (S(\l)\f)}}_{\mathcal{H}_b}
=O(\jap{r}^{-\d})
\]
as $r\to+\infty$ with some $\d>0$. Replacing $\f$ by $\sqrt{2\pi k}\, \f$, and applying 
$J$ to these terms, we conclude the following:

\begin{thm}
Let $\f\in \mathcal{D}(Q)$ and let $\l\in\re_+\setminus\s_{pp}(P)$. Then there exists 
$\G$, a generalized eigenfunction of $P$ with the eigenvalue $\l$ such that 
$\G\in \jap{r}^{s}\mathcal{H}$ for $s>1/2$ and 
\[
\biggnorm{ \G(r,\cdot) - r^{-(n-1)/2} \bigpare{e^{-ikr}\f + e^{ikr} (S(\l)\f)}}_{\mathcal{H}_b}
=O(\jap{r}^{-(n-1)/2-\d}) 
\]
as $r\to+\infty$ with some $\d>0$, where $k=\sqrt{2\l}$. 
\end{thm}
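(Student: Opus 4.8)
The plan is to take the formula \eqref{} for $W_\pm$ in terms of the boundary values of the resolvent, namely the representation
\[
W_- E_{P_f}(I)\f = \int_I \bigpare{J - (P-\l-i0)^{-1} T} F_{0,-}(\l)^* F_{0,-}(\l)\f \, d\l,
\]
and extract from it a single generalized eigenfunction $\G = \bigpare{J - (P-\l-i0)^{-1}T} F_{0,-}(\l)^*\f$ for fixed $\l \in \re_+\setminus\s_{pp}(P)$ and $\f\in\mathcal D(Q)$. That $\G$ lies in $\jap{r}^s\mathcal H$ for $s>1/2$ follows from the limiting absorption principle (Theorem~6) together with Lemma~7, which controls $T F_{0,-}(\l)^*\f$ in a weighted space; and that it is a generalized eigenfunction, i.e.\ $(P-\l)\G = 0$, is an algebraic consequence of $(P-\l)(P-\l-i0)^{-1} = 1$ on suitable spaces and $PJ - JP_f = T$ applied to $F_{0,-}(\l)^*\f$, which is a $\l$-eigenfunction of $P_f$.

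The core of the argument is the asymptotics as $r\to+\infty$. First I would pull everything back to the cylinder by applying $J^*$, so that the quantity to analyze lives on $M_f = \re\times\pa M$. Writing $J^*\G = J^*J F_{0,-}(\l)^*\f - (P_f-\l-i0)^{-1}\F$ with $\F = \bigpare{J^*T - T^*(P-\l-i0)^{-1}T}F_{0,-}(\l)^*\f$ via the resolvent identity, the first term is the explicit incoming wave $\frac{1}{\sqrt{2\pi k}} e^{-ikr}\f(\th)$ for $r\ge 1$, while the second is handled by Lemma~13: the free resolvent $(-\frac12\pa_r^2 - \l - i0)^{-1}$ has kernel $(i/k)e^{i|r-r'|k}$. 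Splitting the convolution integral at $r' = r$ gives $\G_1 + \G_2$; since $\F \in \jap{r}^{-s}\mathcal H_f$ with $s>1/2$ (again by Lemma~7 and Theorem~6), Cauchy--Schwarz in $r'$ shows $\norm{\G_2(r,\cdot)}_{\mathcal H_b} = O(\jap{r}^{-\d})$ for some $\d>0$, and $\G_1$ differs from the outgoing wave $\G_3 = -\frac{i}{k}e^{ikr}\int e^{-ir'k}\F(r',\cdot)\,dr'$ by another tail of the same type. The Fourier integral defining $\G_3$ is recognized, via the representation formula \eqref{} for $S(\l)$, as $\frac{1}{\sqrt{2\pi k}} e^{ikr}(S(\l)\f)$. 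Assembling, $J^*\G$ equals $\frac{1}{\sqrt{2\pi k}}\bigpare{e^{-ikr}\f + e^{ikr}S(\l)\f}$ up to an $O(\jap{r}^{-\d})$ error in $\mathcal H_b$.

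To finish, rescale $\f \mapsto \sqrt{2\pi k}\,\f$ to absorb the normalizing constant, and apply $J$, which on $\tilde M_\infty$ is multiplication by $r^{-(n-1)/2}$ (and $j(r)=1$ there), converting the statement about $J^*\G$ into the stated statement about $\G$: the factor $r^{-(n-1)/2}$ appears in front of $e^{-ikr}\f + e^{ikr}S(\l)\f$, and the error becomes $O(\jap{r}^{-(n-1)/2-\d})$ in $\mathcal H_b$. One must also note that $J^*J$ acts as the identity for $r\ge 1$ so no additional error is introduced in the relevant region, and that $JJ^*\G - \G$ is supported in $M_c \cup \{r < 1\}$, hence irrelevant to the large-$r$ asymptotics.

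The main obstacle I expect is the bookkeeping needed to rigorously justify that $\F \in \jap{r}^{-s}\mathcal H_f$ with a \emph{uniform} decay rate and, more delicately, that the split-convolution tail estimates genuinely produce a power gain $\jap{r}^{-\d}$ rather than merely $o(1)$: this requires choosing $s$ slightly above $1/2$, using that $\F$ actually decays like $\jap{r}^{-\m}$ with $\m>1$ coming from Lemma~7 (the $T$-term) and from the extra $\jap{r}^{-s}$ weight the limiting absorption principle can absorb (the $T^*(P-\l-i0)^{-1}T$-term), and tracking how these combine through the $|r-r'|$ kernel. The identification $\G_3 = \frac{1}{\sqrt{2\pi k}} e^{ikr}(S(\l)\f)$ via the resolvent formula for $S(\l)$ is essentially algebraic once $F_{0,+}(\l)$ is written out explicitly, but care is needed that it holds for a.e.\ $\l$ in the first instance and then, by the continuity supplied by Theorem~6(3), for every $\l\in\re_+\setminus\s_{pp}(P)$.
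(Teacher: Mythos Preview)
Your proposal is correct and follows essentially the same argument as the paper: define $\G=(J-(P-\l-i0)^{-1}T)F_{0,-}(\l)^*\f$, pull back via $J^*$ using the resolvent identity to get $J^*JF_{0,-}(\l)^*\f-(P_f-\l-i0)^{-1}\F$, apply the explicit free-resolvent kernel (this is Lemma~12, not Lemma~13), split the convolution at $r'=r$, identify the outgoing term with $S(\l)\f$ via the representation formula for the scattering matrix, and finish by rescaling and applying $J$. Your closing discussion of the bookkeeping for $\F\in\jap{r}^{-s}\mathcal H_f$ and the tail estimates is, if anything, more careful than the paper's, which simply asserts $\F\in\jap{r}^{-s}\mathcal H_f$ with $s>1/2$ and the $O(\jap{r}^{-\d})$ bounds without elaboration.
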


The eigenfunction $\G$ in the theorem should be unique, though we do not 
prove it here (cf.~\cite{MZ}). This result implies that our scattering matrix $S(\l)$ is actually 
the same as the {\it absolute scattering matrix} of Melrose. 
In other words, the absolute scattering matrix can be defined in terms of 
the traditional time-dependent scattering theory. 


\appendix
\section{Geometric Construction}

In this appendix, we discuss relationship between our model and the Laplace-Beltrami 
operator on Riemannian manifolds, in particular, manifolds with scattering metrics. 

Let $M$ be a manifold with a smooth compact boundary $\pa M$. Let $x$ be a boundary defining 
function, i.e., $x$ is a smooth function on $\overline M$ such that $\pa M=\{p\in \overline{M}
\,|\, x(p)=0\}$. A metric $g$ on $M$ is called a {\it scattering metric in the sense of Melrose},
if 
\[
g=\frac{dx^2}{x^4} +\frac{\tilde h}{x^2}
\]
in a neighborhood of $\pa M$, where $\tilde h$ is a $(0,2)$-tensor such that 
$\tilde h$ is smooth up to the boundary, and the pull-back of $\tilde h$ to $\pa M$ 
defines a Riemannian metric on $\pa M$. 

We introduce the {\it polar coordinate}: 
\[
(r,\th)=(x^{-1},\th)\in (\c,\infty)\times \pa M
\]
in a neighborhood of $\pa M$ with some $\c>0$. 
In this coordinate, the above metric $g$ is expressed as 
\[
g= dr^2+r^2\tilde h(r,\th,-dr/r^2,d\th). 
\]
By the assumption, $\tilde h$ has an asymptotic expansion in $r$ as $r\to\infty$:
\begin{align*}
\tilde h &\sim \sum_{j=0}^\infty h_j(\th,dr/r^2,d\th) r^{-j} \\
&\sim \sum_{j=0}^\infty h_j^1(\th) dr^2\, r^{-4-j} + 
\sum_{j=0}^\infty h_j^2(\th) dr\,d\th\, r^{-2-j} +
\sum_{j=0}^\infty h_j^3(\th) d\th^2\, r^{-j}
\end{align*}
as $r\to\infty$. Hence we can write
\begin{equation}
g= m_1(r,\th) dr^2 + m_2(r,\th)dr(rd\th) +m_3(r,\th) (rd\th)^2, 
\end{equation}
where $m_j$, $j=1,2,3$, satisfy 
\begin{align*}
&\bigabs{\pa_r^\ell\pa_\th^\a (m_1(r,\th)-1)} \leq C_{\ell\a}\jap{r}^{-2-\ell},\\
&\bigabs{\pa_r^\ell\pa_\th^\a m_2(r,\th)} \leq C_{\ell\a}\jap{r}^{-1-\ell}, \\
&\bigabs{\pa_r^\ell\pa_\th^\a (m_3(r,\th)-h_0^3(\th))} \leq C_{\ell\a}\jap{r}^{-1-\ell} 
\end{align*}
for any $\ell$, $\a$ and $(r,\th)\in (\c,\infty)\times\pa M$. 
By the assumption, $h_0^3$ is a positive tensor on $\pa M$. 
More generally, we call $g$ a {\it scattering metric}\/ if $g$ has the above expression 
(A.1) and $m_j$ satisfy
\begin{align*}
&\bigabs{\pa_r^\ell\pa_\th^\a (m_1(r,\th)-1)} \leq C_{\ell\a}\jap{r}^{-\n_1-\ell},\\
&\bigabs{\pa_r^\ell\pa_\th^\a m_2(r,\th)} \leq C_{\ell\a}\jap{r}^{-\n_2-\ell}, \\
&\bigabs{\pa_r^\ell\pa_\th^\a (m_3(r,\th)-h_0^3(\th))} \leq C_{\ell\a}\jap{r}^{-\n_3-\ell} 
\end{align*}
for any $\ell$, $\a$ and $(r,\th)\in (\c,\infty)\times\pa M$, where $\n_1>1$, $\n_2>1/2$
and $\n_3>0$. We denote $g^*=(g^{jk})= (g_{jk})^{-1}$ be the Riemaniann metric on 
$T^*M$. If $g$ is a scattering metric, then $g^*$ is also expressed as follows: 
\[
g^*= b_1(r,\th)\, \pa_r^2 + b_2(r,\th)\pa_r\,(r^{-1}\pa_\th)+b_3(r,\th)\, (r^{-1}\pa_\th)^2, 
\]
where $b_j$, $j=1,2,3$, satisfy 
\begin{align*}
&\bigabs{\pa_r^\ell\pa_\th^\a (b_1(r,\th)-1)} \leq C_{\ell\a}\jap{r}^{-\n_1'-\ell},\\
&\bigabs{\pa_r^\ell\pa_\th^\a b_2(r,\th)} \leq C_{\ell\a}\jap{r}^{-\n_2'-\ell}, \\
&\bigabs{\pa_r^\ell\pa_\th^\a (b_3(r,\th)-h(\th))} \leq C_{\ell\a}\jap{r}^{-\n_3'-\ell} 
\end{align*}
for any $\ell$, $\a$ and $(r,\th)\in (\c,\infty)\times\pa M$, where
$\n_1'>1$, $\n_2'>1/2$, $\n_3'>0$, and $h=(h_0^3)^{-1}$ is a positive metric tensor on 
$T^*(\pa M)$. (Note that here we use $(\pa_r,\pa_\th)$ to represent the basis of the 
tangent space $T_x M$, $x=(r,\th)\in M$.) 

We now consider the Laplace-Beltrami operator $\triangle_g$ on $C^\infty(M)$:
\[
\triangle_g = \frac{1}{\sqrt{\det g}} \sum_{j,k} \pa_{x_j}g^{jk}\sqrt{\det g}\  \pa_{x_k}.
\]
It is well-known that $-\triangle_g$ is a non-negative and symmetric operator 
on $\mathcal{H}'=L^2(M,\sqrt{\det g}\, dx)$. 
The principal symbol of $\triangle_g$ is given by $g^{jk}(x)\x_j\x_k$, and they satisfy the 
above condition, which is almost identical to Assumption~A. However, we use different 
measure to define the Hilbert spaces $\mathcal{H}$ and $\mathcal{H}'$, 
though $\sqrt{\det g}$ asymptotically converges to $G$ as $r\to\infty$. 
We construct an identification operator $U: \mathcal{H}' \to \mathcal{H}$ as follows: 
Let
\[
\G(x)=\biggpare{\frac{\sqrt{\det g(x)}}{G(x)}}^{1/2} \quad \text{for }x\in M, 
\]
and set a unitary operator $U$ from $\mathcal{H}'$ to $\mathcal{H}$ by 
\[
(U\f)(x) =\G(x)\f(x) \quad \text{for }\f\in \mathcal{H}'.
\]
It is easy to show 
\[
\bigabs{\pa_r^\ell\pa_\th^\a(\G(r,\th)-1)}\leq C_{\ell\a}\jap{r}^{-\n-\ell}
\quad\text{for } (r,\th)\in (\c,\infty)\times\pa M
\]
for any $\ell$, $\a$ with some $\n>0$. Then we set
\[
P=-\frac12 U\triangle_g U^{-1} \quad\text{on }\mathcal{H}=L^2(M,Gdx).
\]
By straightforward computations, we have 
\[
U\triangle_g U^{-1} = G^{-1} \sum_{j,k} \pa_{x_j} g^{jk} G \pa_{x_k} +W,
\]
where 
\[
W= \sum_{j,k} g^{jk}\G^{-2}(\pa_{x_j}\G)(\pa_{x_k}\G) 
+ \sum_{j,k} G^{-1} \pa_{x_k}\bigpare{g^{jk}G\G^{-1}(\pa_{x_j}\G)}.
\]
We can easily show $W$ satisfies the condition for $V$ in Assumption~A with 
$\m_4>2$. Thus $P$ satisfies Assumption~A, and since $P$ is unitarily equivalent to 
$-\frac12\triangle_g$, our results of this paper apply to perturbations of the 
Laplace-Beltrami operator as well. 


\section{Abstract stationary scattering theory}

Here we give a formulation of the abstract stationary scattering theory 
suitable for our application in Section~3 and Section~4. 
We do not give full proof of theorems below, but only make remarks 
on the necessary modifications. See, e.g., Kato-Kuroda \cite{KK}
or Yafaev \cite{Yaf, Yaf1} for the full discussion of the abstract stationary 
scattering theory. 

Let $\mathcal{H}_j$, $j=1,2$, be Hilbert spaces, and let $H_j$ be self-adjoint 
operators on $\mathcal{H}_j$, $j=1,2$, respectively. Let $\mathcal{X}_j$, 
$\mathcal{Y}_j$ be Banach spaces such that 
\[
\mathcal{Y}_j \subset \mathcal{X}_j \subset \mathcal{H}_j \subset 
\mathcal{X}_j^* , \quad j=1,2,
\]
and $\mathcal{Y}_j$ are dense in $\mathcal{H}_j$. We suppose 
$(H_j+i)^{-1} \mathcal{Y}_j\subset \mathcal{Y}_j$. 
The embedding $\mathcal{H}_j\subset \mathcal{X}_j^*$  are defined 
through the standard pairing in $\mathcal{H}_j$. 
Let 
\[
J_{21}\in \mathcal{L}(\mathcal{H}_1,\mathcal{H}_2), \quad J_{21}\in
\mathcal{L}(\mathcal{H}_2,\mathcal{H}_1)
\]
be bounded operators such that $J_{12}=(J_{21})^*$, and we suppose 
\[ 
J_{jk}(\mathcal{Y}_k\cap\mathcal{D}(H_k)) \subset  \mathcal{D}(H_j) \quad 
\text{for } (j,k)=(1,2) \text{ or } (2,1).
\]
We fix a bounded interval $I\subset\subset \re$, and consider the scattering 
theory on the energy interval $I$. We write 
\[
R_j(z)=(H_j-z)^{-1}
\]
for $j=1,2$, $z\in\co\setminus\re$, and 
\[
G_{jk}(z) =(H_j-z)J_{jk}R_k(z) = J_{jk} +T_{jk} R_k(z)\ :\ 
\mathcal{H}_j \to \mathcal{H}_k 
\]
for  $(j,k)=(1,2)$ or $(2,1)$, where 
\[
T_{jk} = H_j J_{jk}- J_{jk} H_k. 
\]
We note $\mathcal{D}(G_{jk}(z)) \supset \mathcal{Y}_k$, and 
$\mathcal{D}(T_{jk})\supset \mathcal{D}(H_k)\cap \mathcal{Y}_k$. 

\begin{ass} (1) Let $\l\in I$ and $j=1,2$. Then 
\[
R_j(\l\pm i0) =\slim_{\e\downarrow 0} R_j(\l\pm i\e)\in 
\mathcal{L}(\mathcal{X}_j, \mathcal{X}_j^*)
\]
exist. Moreover, for $\f\in\mathcal{X}_j$, $\l\mapsto R_j(\l\pm i0)\f$ are 
strongly continuous in $\l\in I$. \newline
(2) Let $(j,k)=(1,2)$ or $(2,1)$. For $z\in\co\setminus\re$, $G_{jk}(z)\in 
\mathcal{L}(\mathcal{Y}_k,\mathcal{X}_j)$, and for any $\l\in I$, 
\[
G_{jk}(\l\pm i0) =\slim_{\e\downarrow 0} G_{jk}(\l\pm i\e) 
\in \mathcal{L}(\mathcal{Y}_k, \mathcal{X}_j)
\]
exist. Moreover, for $\f\in\mathcal{Y}_k$, $G_{jk}(\l\pm i0)\f$ are strongly continuous 
in $\l\in I$. \newline
(3) There exist orthogonal projections $P_1^\pm$, $P_2^\pm$ on 
$\mathcal{H}_1$, $\mathcal{H}_2$, respectively, such that 
\[
\lim_{t\to \pm \infty} \bignorm{J_{jk} e^{-it H_k}\f} =\bignorm{P_k^\pm\f}
\quad \text{for }\f\in \mathcal{H}_k.
\]
Moreover, $P_j^\pm$ commute with $H_j$, $j=1,2$. 
\end{ass}

We denote $E_j(\cdot) =E_{H_j}(\cdot)$ for $j=1,2$. 
The condition (1) implies 
\[
E'_j(\l) =\frac{1}{2\pi i} \bigpare{R_j(\l+i 0) - R_j(\l -i 0)}
\in \mathcal{L}(\mathcal{X}_j,\mathcal{X}_j^*)
\]
is well-defined. Let $I'\subset\subset I$, and we define
\[
\Omega_{jk}^\pm(I') \, \f = \int_{I'} E_j'(\l) G_{jk}(\l\pm i0)\,\f \, d\l \ \in\  \mathcal{X}_j^*
\quad \text{for $\f\in\mathcal{Y}_k$}. 
\]

\begin{thm} Let $I'\subset\subset I$ as above, and let $(j,k)=(1,2)$ or $(2,1)$. 
Then: \newline
(1) $\Omega_{jk}^\pm(I')\f\in \mathcal{H}_j$ for $\f\in\mathcal{Y}_k$. Moreover,  
$\Omega_{jk}^\pm(I')$ is extended to a partial isometry from $\mathcal{H}_k$ to 
$\mathcal{H}_j$ with the initial subspace $P_k^\pm E_k(I')\mathcal{H}_k$ 
and the range $P_j^\pm E_j(I') \mathcal{H}_j$. \newline
(2) $\Omega_{jk}^\pm(I')^* =\Omega_{kj}^\pm(I')$. \newline
(3) $\Omega_{jk}^\pm(I')$ satisfies the intertwining property, i.e., 
for any $f\in L^\infty(\re)$, 
\[
f(H_j) \Omega_{jk}^\pm(I') =\Omega_{jk}^\pm(I') f(H_k). 
\]
\end{thm}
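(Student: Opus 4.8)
\emph{Proof.}
The plan is to run the stationary scheme of Kato--Kuroda~\cite{KK} and Yafaev~\cite{Yaf,Yaf1}; the one genuinely new feature here is the pair of intermediate spaces $\mathcal{Y}_j\subset\mathcal{X}_j$, forced on us because in the applications the perturbation $T$ is not $H$-bounded but becomes so only after one extra smoothing (compare the role of $(1+\tilde Q)^{-1}$ in Lemma~10), so resolvents must be sandwiched between two different weighted spaces rather than a single one. Fix $(j,k)=(1,2)$ or $(2,1)$ and write $\Omega^\pm=\Omega_{jk}^\pm(I')$. Note that the full time-dependent wave operator is \emph{not} a hypothesis here; the only dynamical input is Assumption~C(3).

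First I would establish two \emph{on-shell identities} together with the well-definedness of $\Omega^\pm$. By Assumption~C(1)--(2) the maps $\l\mapsto E_j'(\l)=\tfrac{1}{2\pi i}\bigpare{R_j(\l+i0)-R_j(\l-i0)}\in\mathcal{L}(\mathcal{X}_j,\mathcal{X}_j^*)$ and $\l\mapsto G_{jk}(\l\pm i0)\in\mathcal{L}(\mathcal{Y}_k,\mathcal{X}_j)$ are strongly continuous on $I$, so for $\f\in\mathcal{Y}_k$ the integral $\Omega^\pm\f=\int_{I'}E_j'(\l)G_{jk}(\l\pm i0)\f\,d\l$ converges in $\mathcal{X}_j^*$; computing $\norm{\Omega^\pm\f}^2$ by the spectral theorem reduces it to $\int_{I'}\bigpare{G_{jk}(\l\pm i0)^*E_j'(\l)G_{jk}(\l\pm i0)\f,\f}\,d\l$, bounded by~C(1)--(2), so already $\Omega^\pm\f\in\mathcal{H}_j$ with $\norm{\Omega^\pm\f}_{\mathcal{H}}\le C\norm{\f}_{\mathcal{Y}_k}$. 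Expanding $G_{jk}=J_{jk}+T_{jk}R_k$ and using $J_{jk}^*=J_{kj}$, $T_{jk}^*=-T_{kj}$, the resolvent identity $J_{kj}R_j(z)-R_k(z)J_{kj}=R_k(z)T_{kj}R_j(z)$ and the spectral relation $E_j'(\l)(H_j-\l)=0$, a purely algebraic computation (as in the references, needing only~C(1)--(2) to pass to the boundary values) yields $G_{jk}(\l\pm i0)^*E_j'(\l)=E_k'(\l)G_{kj}(\l\pm i0)$ on~$I$, read between the appropriate weighted spaces; transposing, $\bigpare{\Omega_{jk}^\pm(I')\f,\psi}=\bigpare{\f,\Omega_{kj}^\pm(I')\psi}$ for $\f\in\mathcal{Y}_k$, $\psi\in\mathcal{Y}_j$. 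Assertion~(3) is likewise purely stationary: $f(H_j)E_j'(\l)=f(\l)E_j'(\l)$, and for $f$ in a dense subalgebra $G_{jk}(\l\pm i0)\bigpare{f(H_k)-f(\l)}=(H_j-\l)J_{jk}\,g(H_k)$ with $g(x)=(f(x)-f(\l))/(x-\l)$ is annihilated by $E_j'(\l)$; pass to general $f\in L^\infty(\re)$ by approximation.

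To pin down the precise initial and range subspaces (and hence also the $\mathcal{H}_k\to\mathcal{H}_j$ boundedness) I would bring in Assumption~C(3). Composing the on-shell identity gives $\norm{\Omega^\pm\f}^2=\int_{I'}\bigpare{E_k'(\l)\,G_{kj}(\l\pm i0)G_{jk}(\l\pm i0)\f,\f}\,d\l$. The Parseval formula relating $\tfrac{\e}{\pi}\int_{\re}R_k(\l\mp i\e)J_{kj}J_{jk}R_k(\l\pm i\e)\,d\l$ to an exponentially weighted time integral of $e^{itH_k}J_{kj}J_{jk}e^{-itH_k}$ (over $t>0$ for the upper sign, $t<0$ for the lower), together with $J_{jk}R_k(z)=R_j(z)G_{jk}(z)$, $J_{kj}R_j(z)=R_k(z)G_{kj}(z)$ and the limiting absorption principle for the $\e\downarrow0$ passage, shows that this integral equals the Abel mean as $t\to\pm\infty$ of $\norm{J_{jk}e^{-itH_k}E_k(I')\f}^2$; by Assumption~C(3) the ordinary limit of the latter exists and equals $\norm{P_k^\pm E_k(I')\f}^2$, so the Abel mean has the same value. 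Hence $\norm{\Omega^\pm\f}=\norm{P_k^\pm E_k(I')\f}\le\norm{\f}_{\mathcal{H}}$, so $\Omega^\pm$ extends to a bounded operator $\mathcal{H}_k\to\mathcal{H}_j$; the adjoint relation above extends by density, giving~(2), and $(\Omega^\pm)^*\Omega^\pm=P_k^\pm E_k(I')$ is an orthogonal projection (since $P_k^\pm$ commutes with $H_k$), so $\Omega^\pm$ is a partial isometry with initial subspace $P_k^\pm E_k(I')\mathcal{H}_k$. Repeating the whole argument with $(j,k)$ interchanged and using~(2), $\Ran\Omega_{jk}^\pm(I')=\bigpare{\mathrm{Ker}\,\Omega_{kj}^\pm(I')}^\perp=P_j^\pm E_j(I')\mathcal{H}_j$, which completes~(1).

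I expect the main obstacle to be the weight bookkeeping: every resolvent--$T$ sandwich occurring in the two on-shell identities and in the Parseval limit must be shown bounded with exactly the right mixture of $\mathcal{Y}$- and $\mathcal{X}$-weights, because (unlike in the scalar case of~\cite{KK,Yaf}) $T_{jk}$ is only form-bounded after the extra smoothing built into $\mathcal{Y}_j$, so the boundary-value statements of Assumption~C must be invoked in their $\mathcal{L}(\mathcal{Y}_k,\mathcal{X}_j)$ form precisely where $T_{jk}$ meets a resolvent; one must also check with some care that $E_k(I')\f$ still lies in the correct space when $G_{jk}(\l\pm i0)$ is applied to it. Once the spaces are fixed consistently throughout, the remaining manipulations are the routine ones from~\cite{KK,Yaf}.
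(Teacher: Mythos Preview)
Your proposal is correct and follows the same route as the paper's sketch: the standard Kato--Kuroda/Yafaev stationary scheme, with the partial-isometry property established by rewriting $\norm{\Omega_{jk}^\pm(I')\f}^2$ as $\lim_{\e\downarrow 0}\tfrac{\e}{\pi}\int\norm{J_{jk}R_k(\l\pm i\e)E_k(I')\f}^2\,d\l$ via Plancherel/Abel and then invoking Assumption~C(3) to identify this with $\norm{P_k^\pm E_k(I')\f}^2$. Your explicit handling of the on-shell identity for~(2), the spectral-calculus argument for~(3), and your diagnosis that the only new bookkeeping is the $\mathcal{Y}_j\subset\mathcal{X}_j$ pairing are exactly what the paper has in mind when it says ``the introduction of another auxiliary spaces~$\mathcal{Y}_j$ does not change the argument.''
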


We make several remarks on the proof. At first, the introduction of 
another auxiliary spaces $\mathcal{Y}_j$ does not change the argument, 
and we can just follow the standard procedure. 

Also by the standard argument, we can easily observe 
$\Omega_{jk}^\pm(I')E_k(F)=0$ if $I'\cap F=\emptyset$, and hence 
\[
\Omega_{jk}^\pm(I') = \Omega_{jk}^\pm(I') E_k(I').
\]
On the other hand, by Assumption~C-(3), we have 
\[
\lim_{\e\downarrow 0} 2\e\int_{\re_\pm} \bignorm{J_{jk} e^{-it(H_k-i\e)}
\f}^2 dt =\bignorm{P_k^\pm \f}^2.
\]
Then by the Plancherel theorem, we learn 
\[
\lim_{\e\downarrow 0} \frac{\e}{\pi} \int_{-\infty}^\infty \bignorm{J_{jk} R_k(\l\pm i\e)\f}^2
 d\l  =\bignorm{P_k^\pm \f}^2. 
\]
For $\f\in \mathcal{Y}_k$, we also have 
\begin{align*}
\bignorm{\Omega_{jk}^\pm(I')\f}^2 
&= \lim_{\e\downarrow 0} \frac{\e}{\pi} \int_{I'} \bignorm{J_{jk} 
R_k(\l\pm i\e)E_k(I')\f}^2 d\l  \\
&= \lim_{\e\downarrow 0} \frac{\e}{\pi} \int_{-\infty}^\infty \bignorm{J_{jk} 
R_k(\l\pm i\e) E_k(I')\f}^2 d\l
=\bignorm{P_k^\pm E_k(I') \f}^2
\end{align*}
and this implies $\Omega_{jk}^\pm(I')$ is a partial isometry from 
$P_k^\pm E_k(I')$ into $\mathcal{H}_j$. 

The rest of the proof works as in the one-space case. 
If we know the existence of the wave operators, $\Omega_{jk}^\pm(I')$ 
are in fact the wave operators, and we can conclude the asymptotic completeness. 

\begin{thm}
Suppose Assumption~C, and let either $(j,k)=(1,2)$ or $(2,1)$. Suppose 
\[
\int_{\re_\pm} \bignorm{T_{jk} e^{-itH_k} \f} dt < \infty
\]
for $\f\in \mathcal{D}$, a dense subspace of $\mathcal{Y}_k$. Then 
\[
W^\pm_{jk}(I') =\slim_{t\to\pm\infty} e^{itH_j} J_{jk} e^{-itH_k} E_k(I')
\]
exists and $W_{jk}^\pm(I')= \Omega_{jk}^\pm(I')$. Moreover, if 
$P_j^\pm=1$, then $W_{jk}^\pm(I')$ are unitary operators from 
$P_k^\pm E_k(I') \mathcal{H}_k$ to $E_j(I')\mathcal{H}_j$. 
In other words, $W_{jk}^\pm(I')$ are asymptotically complete on $I'$. 
\end{thm}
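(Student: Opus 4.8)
The plan is to establish the existence of the time-dependent wave operators $W^\pm_{jk}(I')$ via the Cook--Kuroda argument, and then to identify them with the stationary objects $\Omega^\pm_{jk}(I')$ by comparing their action on a spectral integral. First I would fix $\f\in\dom$ and write, for $u,v\in\dom(H_k)\cap\mathcal Y_k$,
\[
e^{itH_j} J_{jk} e^{-itH_k}\f - J_{jk}\f = i\int_0^t e^{isH_j} T_{jk} e^{-isH_k}\f\,ds,
\]
which makes sense because of the mapping property $J_{jk}(\mathcal Y_k\cap\dom(H_k))\subset\dom(H_j)$ assumed at the outset. The hypothesis $\int_{\re_\pm}\norm{T_{jk}e^{-itH_k}\f}\,dt<\infty$ then gives norm convergence of the right-hand side as $t\to\pm\infty$ along $\dom$; since $\norm{e^{itH_j}J_{jk}e^{-itH_k}}\leq\norm{J_{jk}}$ is uniformly bounded and $\dom$ is dense in $\mathcal Y_k$, hence dense in $E_k(I')\mathcal H_k$ after multiplying by $E_k(I')$, the strong limit $W^\pm_{jk}(I')$ exists on all of $\mathcal H_k$. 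The intertwining property $e^{i\tau H_j}W^\pm_{jk}(I') = W^\pm_{jk}(I') e^{i\tau H_k}$ is immediate from the definition, so $W^\pm_{jk}(I')$ commutes with $E_k(\cdot)$ and $W^\pm_{jk}(I') = W^\pm_{jk}(I')E_k(I')$.

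The core step is the identification $W^\pm_{jk}(I') = \Omega^\pm_{jk}(I')$. Here I would test against a vector $g(H_k)\f$ with $g\in C_0^\infty(I')$ and $\f\in\mathcal Y_k$, and use the representation of $e^{-itH_k}$ through the resolvent: for $\f\in\mathcal Y_k$,
\[
\Omega^\pm_{jk}(I')g(H_k)\f = \int_{I'} g(\l)\,E_j'(\l)G_{jk}(\l\pm i0)\f\,d\l,
\]
which by Assumption~C--(1),(2) is a well-defined element of $\mathcal H_j$ (its $\mathcal H_j$-membership and the partial-isometry bound were already established in Theorem~14). On the other side, one writes $e^{itH_j}J_{jk}e^{-itH_k}g(H_k)\f$ using $e^{itH_j}J_{jk} = J_{jk}e^{itH_k} + i\int_0^t e^{isH_j}T_{jk}e^{i(t-s)H_k}\,ds$, inserts the spectral resolutions $E_j'$ and $E_k'$, and takes $t\to\pm\infty$; the oscillatory factors $e^{it(\l-\m)}$ localize to the diagonal $\l=\m$ in the limit (a stationary-phase / Riemann--Lebesgue argument, exactly as in the one-space Kato--Kuroda theory), producing precisely the integrand $E_j'(\l)(J_{jk} + T_{jk}R_k(\l\pm i0))\f = E_j'(\l)G_{jk}(\l\pm i0)\f$. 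This is the step where the auxiliary spaces $\mathcal Y_k$ and the strong continuity in Assumption~C--(2) do the real work, guaranteeing that the limiting integral converges and equals $\Omega^\pm_{jk}(I')g(H_k)\f$; by density of such vectors in $E_k(I')\mathcal H_k$ and uniform boundedness, $W^\pm_{jk}(I') = \Omega^\pm_{jk}(I')$ on $\mathcal H_k$.

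Finally, the completeness statement follows by combining this identification with Theorem~14--(1): $\Omega^\pm_{jk}(I')$ is a partial isometry from $\mathcal H_k$ onto $P_j^\pm E_j(I')\mathcal H_j$ with initial space $P_k^\pm E_k(I')\mathcal H_k$. When $P_j^\pm = 1$ the range is all of $E_j(I')\mathcal H_j$, so $W^\pm_{jk}(I')$ restricts to a unitary map from $P_k^\pm E_k(I')\mathcal H_k$ onto $E_j(I')\mathcal H_j$; unpacking the definition of asymptotic completeness on $I'$, this is exactly the claim. I expect the main obstacle to be the passage to the limit in the spectral-integral comparison: one must justify interchanging the $t\to\pm\infty$ limit with the $\l$-integration, which requires the strong continuity and the $\mathcal L(\mathcal Y_k,\mathcal X_j)$-boundedness of $G_{jk}(\l\pm i0)$ uniformly on $I'$ together with a dominated-convergence argument for the oscillatory integrals — all of which are supplied by Assumption~C but must be assembled carefully, and this is precisely the point where the remark ``the rest of the proof works as in the one-space case'' is invoked.
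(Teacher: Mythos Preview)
Your proposal is correct and follows the same outline the paper indicates: Cook--Kuroda for existence, identification $W^\pm_{jk}(I')=\Omega^\pm_{jk}(I')$ via the standard one-space Kato--Kuroda argument, and completeness read off from the partial-isometry statement of Theorem~14. The paper does not spell out the identification step either, deferring to the one-space literature; however, its computation (B.1) just after the theorem points to a route slightly more direct than your double-spectral-integral Riemann--Lebesgue argument: apply $W^\pm_{jk}$ to $E_k'(\l)\f$ and use that $e^{-itH_k}$ acts there as the scalar $e^{-it\l}$, so the Abel-regularized Duhamel integral collapses to $-R_j(\l\mp i0)T_{jk}E_k'(\l)\f$, and integrating in $\l$ recovers $\Omega^\pm_{jk}(I')\f$ after a resolvent identity. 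Both approaches work; the Abel-limit version simply avoids the bookkeeping of localizing to the diagonal.
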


Finally, we give a proof of the representation formulas (4.1) and (4.2). 
The following computation is slightly formal, but the justification 
is straightforward. Let $F_1(\cdot)$ be a spectral representation 
of $H_1$ as used in Section~4. Then by the Cook-Kuroda method, 
for $\f\in\mathcal{Y}_1$ we have 
\begin{align*}
W_{21}^\pm E'_1(\l)\f 
&= \biggpare{J_{21} +\lim_{\e\downarrow 0} \int_0^{\pm\infty} 
e^{itH_2}(iT_{21}) e^{-itH_1} e^{-\e|t|} dt} E_1'(\l)\f  \\
&= \biggpare{J_{21} +\lim_{\e\downarrow 0} i \int_0^{\pm\infty}
e^{it(H_2-\l\pm i\e)} T_{21} dt} E_1'(\l) \f \\
&= \bigpare{J_{21} -R_2(\l\mp i0)T_{21}} E_1'(\l)\f. 
\end{align*}

Noting $E_1'(\l) =F_1(\l)^* F_1(\l)$, we learn 
\begin{equation}
W_{21}^\pm E_1(I')\f = 
\int_{I'} \bigpare{J_{21} -R_2(\l\mp i0)T_{21}} F_1(\l)^* F_1(\l)\f\, d\l, 
\end{equation}
and (4.2) follows. 

Concerning the scattering matrix, we write $W_\pm=W_{21}^\pm$ and 
$S=W_+^* W_-$. Then we have 
\[
S= W_+^* W_+ - W_+^*(W_+-W_-).
\]
By Assumption~C-(3), we can easily see $W_+^* W_+ = P_1^+$. 
On the other hand, $S$ is a map from $\Ran P_1^-$ to $\Ran P_1^+$, and 
hence we may write the first term as $P_1^+ P_1^-$. 

The second term is computed as follows: by (B.1) we have 
\begin{align*}
(W_+-W_-)E_1'(\l) &= \bigpare{-R_2(\l-i0)T_{21} 
+R_2(\l+i0) T_{21}} E_1'(\l) \\
&= 2\pi i E_2'(\l) T_{21} E_1'(\l), 
\end{align*}
and we note $W_+ E_1'(\l) = E_2'(\l) W_+$. 
Using these, for $\f,\g\in\mathcal{Y}_1$ we have 
\begin{align*}
&\bigpare{\f, W_+^* (W_+-W_-)\g} 
=  \int_{I'}\int_{I'} \bigpare{W_+ E_1'(\l)\f, 2\pi i E_2'(\l') T_{21} E_1'(\l') \g}d\l\,d\l' \\
&\qquad = 2\pi i \int_{I'} \int_{I'} \bigpare{W_+\f, E_2'(\l)E_2'(\l') T_{21} E_1'(\l')\g} d\l\, d\l' \\
&\qquad = 2\pi i \int_{I'} \bigpare{W_+ E_1'(\l) \f, T_{21} E_1'(\l)\g} d\l \\
&\qquad = 2\pi i\int_{I'} \bigpare{(J_{21}-R_2(\l-i0)T_{21})E_1'(\l)\f, T_{21}E_1'(\l)\g} d\l \\
&\qquad = 2\pi i \int_{I'} \Bigpare{ F_1(\l)\f, F_1(\l)\bigpare{J_{21}^* T_{21} 
-T_{21}^* R_2(\l+i0) T_{21}} F_1(\l)^* F_1(\l)\g} d\l.
\end{align*}
These imply 
\[
\bigpare{\f,(S-P_1^+P_1^-)\g} = \int_{I'} \bigpare{F_1(\l)\f,(-2\pi i)T(\l) F_1(\l)\g}d\l,
\]
where $T(\l)$ is the T-matrix defined by
\[
T(\l) = F_1(\l) \bigpare{J_{21}^* T_{21} - T_{21}^* R_2(\l+i0) T_{21}} F_1(\l)^*.
\]
In the one-space scattering theory, we choose $P_1^\pm=1$, and hence 
$S(\l)=1-2\pi i T(\l)$ is the scattering matrix. However, in our application in 
Section~4, $P_f^+ P_f^- =0$, hence $S(\l)=-2\pi i T(\l)$, and we obtain the 
formula (4.1).  \qed


\end{document}